\newtheorem{proof}{Proof} 
\newtheorem{theorem}{Theorem}
\newtheorem{assumption}{Assumption}
\newtheorem{lemma}{Lemma}
\newtheorem{definition}{Definition}
\begin{document}

\title{Enhancing Safety in Nonlinear Systems: Design and Stability Analysis of Adaptive Cruise Control}
\author{Fan Yang, Haoqi Li, Maolong Lv,  Jiangping Hu, \IEEEmembership{Senior Member, IEEE}, Qingrui Zhou, and Bijoy K. Ghosh, \IEEEmembership{Life Fellow, IEEE}
	
	\thanks{This work was partially supported by the National Key Research and Development Program of China under Grant 2022YFE0133100, the State Key Program of National Natural Science Foundation of China under Grant U21B2008, and by the Sichuan Science and Technology Program under Grant 24NSFSC1362. \emph{(Corresponding author: Jiangping Hu.)}}
	\thanks{F. Yang, H. Li and J. Hu are with the School of Automation Engineering, University of Electronic Science and Technology of China, Chengdu 611731, China  e-mail: (yanf@std.uestc.edu.cn; lihq@std.uestc.edu.cn; hujp@uestc.edu.cn).}
	\thanks{M. Lv is with Air Force Engineering University, Xian, China (e-mail: m.lyu@tudelft.nl).}
	\thanks{Q. Zhou is with the Qianxuesen Lab, China Academy of Space Technology, Beijing, China (e-mail: zhouqingrui@spacechina.com).}
	\thanks{B. K. Ghosh is with the Department of Mathematics and Statistics, Texas Tech University, Lubbock, TX, 79409-1042, USA, and also with the School of Automation Engineering, University of Electronic Science and Technology of China, Chengdu 611731, China (e-mail: bijoy.ghosh@ttu.edu).}
	
}

\maketitle

\begin{abstract}
	The safety of autonomous driving systems, particularly self-driving vehicles, remains of paramount concern. These systems exhibit affine nonlinear dynamics and face the challenge of executing predefined control tasks while adhering to state and input constraints to mitigate risks. However, achieving safety control within the framework of control input constraints, such as collision avoidance and maintaining system states within secure boundaries, presents challenges due to limited options. In this study, we introduce a novel approach to address safety concerns by transforming safety conditions into control constraints with a relative degree of 1. This transformation is facilitated through the design of control barrier functions, enabling the creation of a safety control system for affine nonlinear networks. Subsequently, we formulate a robust control strategy that incorporates safety protocols and conduct a comprehensive analysis of its stability and reliability. To illustrate the effectiveness of our approach, we apply it to a specific problem involving adaptive cruise control. Through simulations, we validate the efficiency of our model in ensuring safety without compromising control performance. Our approach signifies significant progress in the field, providing a practical solution to enhance safety for autonomous driving systems operating within the context of affine nonlinear dynamics.
\end{abstract}

\begin{IEEEkeywords}
	Control barrier function, control Lyapunov function, affine nonlinear system, state and input constraints, adaptive cruise control.
\end{IEEEkeywords}

\section{Introduction}
\label{sec:introduction}
\IEEEPARstart{I}{n} the realm of autonomous driving systems, especially self-driving vehicles, a variety of approaches have been employed to ensure safety. These methodologies encompass the utilization of safety-resilient event-triggered output feedback control to enhance the vehicle's robustness against attacks, preventing attacks from inducing instability in the vehicle \cite{ding2022security}. Additionally, measures such as providing vehicle motion control for smooth driving conditions and comfortable riding trajectories have been implemented to ensure the overall safety of the vehicle \cite{wei2019risk}. Recently, the organic integration of vehicle motion control and deep learning has further endowed the capability of yielding to pedestrians \cite{crosato2022interaction}. Adaptive cruise control, as a form of autonomous driving technology, faces challenges related to passenger comfort and collision avoidance. To address these challenges, a collaborative design framework has been proposed, eliminating conflicts between comfort and hardware by specifying performance constraints \cite{zhang2022collaborative}. Additionally, collision risks are reduced by carefully constraining the system state within predefined ranges \cite{XuGrizzle2017}. \par

Safety is of paramount importance in autonomous driving, and state-constraint control stands out as a primary measure for reducing collision risks.
Barrier Lyapunov functions \cite{Zhu2022}, model predictive control \cite{Brudigam2023}, and control barrier functions (CBFs) \cite{HJWang2022} represent prevailing approaches that prioritize state-constrained control. CBFs surpass the aforementioned methods due to their unique ability to maintain stable system states under limited control inputs, while also achieving computational efficiency. The integration of CBFs into closed-loop control \cite{TaylorAmes2020} offers safety assurances for system protection. As a result, various domains such as multirobot systems \cite{Pickem2017}, automotive systems \cite{XuWaters2017}, and quadrotors \cite{WangAmes2017ICRA} have successfully harnessed CBFs to enhance safety. However, safety-critical systems require simultaneous consideration of multiple safety conditions, including collision avoidance from all directions and achieving diverse objectives like precise tracking of preceding vehicles and overall system stability \cite{ZhangY2022}. Although CBFs ensure safety, they often fall short in addressing system stability. To address this limitation, the intelligent incorporation of control Lyapunov functions (CLFs) \cite{AmesGalloway2014} with CBFs introduces quadratic programming (QP). This facilitates the development of an optimized controller that prioritizes both safety and control goals \cite{Gangopadhyay2022,HuWang2021}. In applications such as adaptive cruise control, time-headway safety requirements serve as control barrier functions, while stability objectives are defined by control Lyapunov functions. This approach enables collision avoidance while dynamically adjusting vehicle speed in response to acceleration changes or lane shifts by leading vehicles \cite{AmesXu2017}. Despite these advancements, it remains crucial to consider the relationship between optimal control input and constrained inputs, as the obtained optimal control input may not always be feasible due to system limitations. This issue renders methods like CBF-QP and CLF-CBF-QP untenable, as demonstrated in \cite{AmesXu2017}. In addressing this challenge, researchers propose optimizing the lower-bound decay rate in CBFs using variables and introducing an optimal decay variable. This contributes to the feasibility of CLF-CBF-QP under control input constraints \cite{ZengCBF2021}. Importantly, the incorporation of control input ($u$) into CBF constraints is pivotal to the application of the aforementioned techniques. The CLF-CBF-QP approach, incorporating control input limitations, emerges as a practical safety control solution for simpler systems discussed in \cite{AmesXu2017,ZengCBF2021}.
\par

Addressing the application of control barrier functions to safety-critical systems with high-relative degree dynamics presents a formidable challenge. Incorporating the control input $u$ into CBF constraints becomes particularly intricate when safety conditions have a relative degree of at least 2 \cite{Khalil2002}. This is evident in systems like Cartesian robotic dynamics \cite{HJWang2022} or the adaptive cruise control system, which features a relative degree 2 safety constraint \cite{Xiao2019}. In such demanding scenarios, existing methods prove inadequate. To tackle this issue, a high-order control barrier function (HOCBF) was introduced \cite{Xiao2019}, aiming to address time-headway safety conditions and reconcile conflicts between the HOCBF and control input constraints through class $\mathcal{K}$ function penalization. However, this HOCBF-based approach lacks a systematic explanation of the control strategy's feasibility, a crucial aspect for real-world scenarios involving external disturbances and system limitations \cite{Hussain2018}. Input saturation, often the most prevalent input constraint, can significantly impact closed-loop system performance and efficiency, potentially compromising stability in extreme cases \cite{NguyenLaurain2016}. While prior research has addressed diverse issues linked to state and input constraints \cite{Zhang2021,Nguyen2021}, the formulation of an appropriate CBF applicable to general affine systems remains a developing field. Ensuring the feasibility of the CLF-HOCBF-QP method proposed in \cite{Xiao2019}, as demonstrated in \cite{Xiao2022}, involves enforcing a single constrained CBF. Nevertheless, the quest for a suitable CBF applicable to general affine systems is still ongoing. This underscores the significant challenge in guaranteeing the safety of affine nonlinear systems featuring arbitrarily high relative degree safety conditions while adhering to input and state constraints. This challenge serves as one of the primary motivations driving the present study.
\par 

This current study aims to target a category of complex affine nonlinear systems characterized by high-relative degree dynamics and subject to both input and state constraints. The research program's significant contribution lies in the establishment of essential and comprehensive terminology, ensuring the feasibility of quadratic programs through the introduction of a pioneering control barrier function. The central focus is on providing necessary and sufficient conditions for optimizing QPs while working within highly dynamic, nonlinear environments to facilitate efficient and effective closed-loop system performance. The key contributions of this study are as follows: \par 

\begin{itemize}
	\item We introduce an innovative control barrier function, skillfully designed to convert high relative degree safety conditions into equivalent control constraints with a relative degree of 1. This pioneering approach ensures system safety by applying these transformed constraints, all with a consistent relative degree of 1.
	\item We present a safety-stability control strategy for general affine nonlinear systems, founded on a control barrier function proposed in Section \ref{section3}. This strategy enables the simultaneous consideration of two potentially conflicting constraints and facilitates the derivation of both necessary and sufficient conditions to ensure the feasibility of satisfying both constraints. Through the implementation of this control strategy, we seamlessly harmonize safety and stability goals within the context of general affine nonlinear systems.
	\item Our study outlines a robust safety-stability control strategy that can be formulated as inequality-constrained quadratic programs, subsequently solvable through a sequential optimization approach. This versatile control strategy finds application across a wide range of systems, including adaptive cruise control systems. The efficacy of this approach in various control contexts highlights its adaptability and versatility, confirming its potential to address different control scenarios.
\end{itemize} \par 

The subsequent sections of this paper are organized as follows. Section \ref{section2} delves into the fundamental background and sets out the problem formulation. In Section \ref{section3}, we lay out the feasible design for safe control, detailing our innovative approach. The application of our newly-conceived control strategy to the adaptive cruise control (ACC) system is expounded in Section \ref{section4}. In Section \ref{section5}, we explicate the simulation results obtained from implementing our novel methodology to the ACC system and affirm its efficacy. Ultimately, in Section \ref{section6}, we conclude by highlighting the salient outcomes and contributions derived throughout our research endeavor. \par   

\section{Preliminaries and Problem Formulation}\label{section2}
\subsection{Preliminaries}
Study an affine control system, which has the following form
\begin{equation}\label{1}
\dot{\boldsymbol{\zeta}}=\mathtt{\hat{F}}(\boldsymbol{\zeta})+\mathtt{\hat{G}}(\boldsymbol{\zeta})\boldsymbol{\mathfrak{\upsilon}},
\end{equation}
where $\boldsymbol{\zeta}=\left[\mathtt{\zeta}_1, \mathtt{\zeta}_2, \ldots, \mathtt{\zeta}_n\right]^T \in \mathtt{X} \subset \mathbb{R}^\mathtt{n}$. The state constraint set of system \eqref{1} is expressed as
\begin{equation}\label{3}
\mathtt{\bar{C}}:=\left\{\boldsymbol{\zeta} \in \mathbb{R}^\mathtt{n}: \mathtt{\theta}(\boldsymbol{\zeta}(t)) \geq 0\right\},
\end{equation}
$\mathtt{\theta}(\boldsymbol{\zeta}(t)) \subset \mathbb{R}$ is a continuous differentiable bounded function, $\mathtt{\hat{F}}(\boldsymbol{\zeta})=\left[\mathtt{\hat{F}}_\mathtt{1}(\boldsymbol{\zeta}), \mathtt{\hat{F}}_\mathtt{2}(\boldsymbol{\zeta}), \ldots, \mathtt{\hat{F}}_\mathtt{n}(\boldsymbol{\zeta})\right]^T \subset \mathbb{R}^\mathtt{n}$ and $ \mathtt{\hat{G}}(\boldsymbol{\zeta})=\left[\mathtt{\hat{G}}_\mathtt{1}(\boldsymbol{\zeta}), \mathtt{\hat{G}}_\mathtt{2}(\boldsymbol{\zeta}), \ldots, \mathtt{\hat{G}}_\mathtt{n}(\boldsymbol{\zeta})\right]^T \subset \mathbb{R}^{\mathtt{n} \times \mathtt{q} }$ exhibit local Lipschitz continuity. The set of control constraints $\boldsymbol{\mathfrak{\upsilon}}\in \mathtt{\Upsilon} \subset{\mathbb{R}^\mathtt{q}}$ is designated as follows \par
\begin{equation}\label{2}
\mathtt{\Upsilon}:=\{\boldsymbol{\mathfrak{\upsilon}} \subset {\mathbb{R}^\mathtt{q}}:\boldsymbol{\mathfrak{\upsilon}}_{\min} \leq \boldsymbol{\mathfrak{\upsilon}}\leq \boldsymbol{\mathfrak{\upsilon}}_{\max}\},
\end{equation}
where $\boldsymbol{\mathfrak{\upsilon}}_{\max},\boldsymbol{\mathfrak{\upsilon}}_{\min} \subset {\mathbb{R}^\mathtt{q}}$ and \eqref{2} is understood by considering each component separately. \par 

\begin{definition}\label{definition5}
	\cite{Glotfelter2017} A set $\mathtt{\bar{C}}\subset \mathbb{R}^\mathtt{n}$  is considered forward invariant for system \eqref{1} if all trajectories originating from an initial state $\boldsymbol{\zeta}({t_\mathtt{0}})\in \mathtt{\bar{C}}$ remain confined within $\mathtt{\bar{C}}$ for all $t$ greater than or equal to $t_\mathtt{0}$.  
\end{definition}

\begin{definition}\label{definition2}
	\cite{AmesXu2017} The continuously differentiable function $\mathtt{\theta}(\boldsymbol{\zeta}(t))$ is a barrier function (BF) for system \eqref{1} if there exists a class $\mathcal{K}$ function $\alpha$ such that
	\begin{equation}\label{4}
	\dot{\mathtt{\theta}}(\boldsymbol{\zeta})+\alpha({\mathtt{\theta}}(\boldsymbol{\zeta})) \geq 0, \forall \boldsymbol{\zeta} \in \mathtt{\bar{C}}.
	\end{equation}
\end{definition} 

\begin{lemma}\label{lemma1}
	\cite{AmesXu2017} If there exists a BF ${\mathtt{\theta}(\boldsymbol{\zeta}(t))}:\mathtt{\bar{C}} \to \mathbb{R}$, then $\mathtt{\bar{C}}$ is forward invariant for \eqref{1}.
\end{lemma}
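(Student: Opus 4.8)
The plan is to derive forward invariance of $\mathtt{\bar{C}}$ directly from the barrier inequality \eqref{4}, using the comparison lemma (in the sense of \cite{Khalil2002}). First I would fix any trajectory $\boldsymbol{\zeta}(t)$ of \eqref{1} with $\boldsymbol{\zeta}(t_\mathtt{0}) \in \mathtt{\bar{C}}$; this trajectory is well defined and unique on a maximal interval because the closed-loop vector field inherits local Lipschitz continuity from $\mathtt{\hat{F}}$ and $\mathtt{\hat{G}}$. Since $\mathtt{\theta}$ is a BF, condition \eqref{4} reads along this trajectory as the scalar differential inequality $\dot{\mathtt{\theta}}(\boldsymbol{\zeta}(t)) \geq -\alpha\big(\mathtt{\theta}(\boldsymbol{\zeta}(t))\big)$, valid as long as $\boldsymbol{\zeta}(t)$ remains in $\mathtt{\bar{C}}$. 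I would then introduce the auxiliary scalar initial value problem $\dot{y} = -\alpha(y)$ with $y(t_\mathtt{0}) = \mathtt{\theta}(\boldsymbol{\zeta}(t_\mathtt{0})) \geq 0$ as the comparison system.

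The next step is to show the comparison solution stays nonnegative. Because $\alpha$ is class $\mathcal{K}$ we have $\alpha(0)=0$ and $\alpha(s) > 0$ for $s > 0$, so along $\dot{y} = -\alpha(y)$ the value $y$ is non-increasing while positive, with $y \equiv 0$ an equilibrium; hence $y(t) \geq 0$ for all $t \geq t_\mathtt{0}$. The comparison lemma then gives $\mathtt{\theta}(\boldsymbol{\zeta}(t)) \geq y(t) \geq 0$ for every $t \geq t_\mathtt{0}$ in the interval of existence, which by the definition of $\mathtt{\bar{C}}$ in \eqref{3} is precisely the statement $\boldsymbol{\zeta}(t) \in \mathtt{\bar{C}}$ — that is, $\mathtt{\bar{C}}$ is forward invariant for \eqref{1} in the sense of Definition \ref{definition5}.

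The step I expect to demand the most care is the regularity and boundary issue at $\mathtt{\theta}=0$: the comparison lemma wants $-\alpha(\cdot)$ continuous with a well-posed comparison solution, and \eqref{4} is only postulated on $\mathtt{\bar{C}}$, so one must rule out escape before the comparison estimate can be applied. I would close this with a continuity (bootstrapping) argument — a solution starting in $\mathtt{\bar{C}}$ can leave only through the boundary $\{\mathtt{\theta}=0\}$, and at any putative exit time $t_\star$ one has $\dot{\mathtt{\theta}}(\boldsymbol{\zeta}(t_\star)) \geq -\alpha(0) = 0$, which contradicts the strict decrease of $\mathtt{\theta}$ needed to cross into $\{\mathtt{\theta}<0\}$; if $\alpha$ is merely continuous, I would instead under-approximate it by a locally Lipschitz class $\mathcal{K}$ function $\tilde{\alpha}\leq\alpha$, run the argument with $\tilde{\alpha}$, and observe the conclusion is unaffected. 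Finally, I would remark that forward invariance on the maximal interval, together with boundedness of $\mathtt{\theta}$ (and, where required, of $\boldsymbol{\zeta}$ on $\mathtt{\bar{C}}$), upgrades the statement to hold for all $t \geq t_\mathtt{0}$; everything else is routine.
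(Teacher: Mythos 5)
The paper offers no proof of this lemma---it is quoted directly from \cite{AmesXu2017}---so the only meaningful comparison is with that reference, and your comparison-lemma argument (bounding $\dot{\mathtt{\theta}} \geq -\alpha(\mathtt{\theta})$ against the scalar system $\dot{y}=-\alpha(y)$, whose solutions from nonnegative data remain nonnegative, after replacing $\alpha$ by a locally Lipschitz class $\mathcal{K}$ minorant) is precisely the standard proof given there, so your proposal is correct and takes essentially the same route. The one soft spot is the exit-time sentence asserting that $\dot{\mathtt{\theta}}(\boldsymbol{\zeta}(t_\star))\geq 0$ ``contradicts'' crossing into $\{\mathtt{\theta}<0\}$---a nonnegative derivative at a single boundary instant does not by itself preclude $\mathtt{\theta}$ from becoming negative immediately afterwards---but your comparison estimate already carries the argument on the interval where the trajectory remains in $\mathtt{\bar{C}}$, so nothing essential is missing.
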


\begin{definition}\label{definition6}
	\cite{Lindemann2019} ${\mathtt{\theta}(\boldsymbol{\zeta}(t))}$ is a control barrier function (CBF) for system \eqref{1} if there exists a class $\mathcal{K}$ function $\alpha$ such that 
	\begin{equation}\label{5}
	\sup _{\boldsymbol{\mathfrak{\upsilon}} \in \mathtt{\Upsilon}}\left[L_{\mathtt{\hat{F}}} \mathtt{\theta}(\boldsymbol{\zeta})+L_{\mathtt{\hat{G}}} \mathtt{\theta}(\boldsymbol{\zeta}) \boldsymbol{\mathfrak{\upsilon}}+\alpha(\mathtt{\theta}(\boldsymbol{\zeta}))\right] \geq 0,  \forall \boldsymbol{\zeta} \in \mathtt{\bar{C}}.
	\end{equation}
\end{definition} 
\noindent Here, $L_{\mathtt{\hat{F}}}$ and $L_{\mathtt{\hat{G}}}$ denote the Lie derivatives along $\mathtt{\hat{F}}(\boldsymbol{\zeta})$ and $\mathtt{\hat{G}}(\boldsymbol{\zeta})$, respectively.

\begin{definition}\label{definition4}
	\cite{AmesGalloway2014} A continuously differentiable function $\mathtt{V}(\boldsymbol{\zeta}) \subset \mathbb{R}$ can be regarded as a control Lyapunov function (CLF) that ensures exponential stability of system \eqref{1} if there exist constants  $\mathtt{\chi}_1 ~ \textgreater ~ 0,\mathtt{\chi}_2 ~ \textgreater ~ 0,\mathtt{\chi}_3 ~ \textgreater ~ 0$  such that $\forall \boldsymbol{\zeta}\in \mathtt{X}$, $\mathtt{\chi}_{1}\|\boldsymbol{\zeta}\|^{2} \leq \mathtt{V}(\boldsymbol{\zeta}) \leq \mathtt{\chi}_{2}\|\boldsymbol{\zeta}\|^{2}$,
	\begin{equation}\label{6}
	\inf _{\boldsymbol{\mathfrak{\upsilon}} \in \mathtt{\Upsilon}}\left[L_{\mathtt{\hat{F}}} \mathtt{V}(\boldsymbol{\zeta})+L_{\mathtt{\hat{G}}} \mathtt{V}(\boldsymbol{\zeta}) \boldsymbol{\mathfrak{\upsilon}}+\mathtt{\chi}_{3} \mathtt{V}(\boldsymbol{\zeta})\right] \leq 0.
	\end{equation}
\end{definition}

\begin{definition}\label{definition3}
	\cite{Khalil2002} The Relative Degree of $\mathtt{\theta}(\boldsymbol{\zeta})$ in relation to system \eqref{1} can be characterized as the minimal number of differentiations required along the dynamics of \eqref{1} until the distinct influence of control $\boldsymbol{\mathfrak{\upsilon}}$ is clearly revealed.
\end{definition}

\begin{assumption}\label{assumption1}
	The relative degree of security constraints $\mathtt{\theta}(\boldsymbol{\zeta})$ is ${m (m\geq2)}$.
\end{assumption}

For system \eqref{1}, we define $G_d(\boldsymbol{\zeta}):=\left(\boldsymbol{\zeta}+\boldsymbol{\mathfrak{d}}\right)^T \mathtt{\hat{G}}(\boldsymbol{\zeta})$, where $\boldsymbol{\mathfrak{d}}=\left[d_1, d_2, \ldots, d_n\right]^T \subset \mathbb{R}^\mathtt{n}$ is a constant vector.

\begin{assumption}\label{assumption2}
	$G_d(\boldsymbol{\zeta}) \neq \boldsymbol{0}, \forall t \geq t_\mathtt{0}$.
\end{assumption} \par

\subsection{Problem Formulation} \label{PF}
Considering system \eqref{1} and assuming Assumptions \ref{assumption1} and \ref{assumption2} are satisfied, our objective is to design a CBF with a relative degree of 1. It will provide a practical control approach for system \eqref{1} and ensure the system's state trajectory closely tracks the desired state. \par 
Consider system (1), with the following constraints \par 
\noindent \textbf{Input constraint}: $\boldsymbol{\mathfrak{\upsilon}}_{\min} \leq \boldsymbol{\mathfrak{\upsilon}}\leq \boldsymbol{\mathfrak{\upsilon}}_{\max}$; \\
\textbf{State restriction}: $\mathtt{\theta}(\boldsymbol{\zeta}(t)) \geq 0$;  \par 
We aim to devise a CBF $\mathtt{\Theta}(\boldsymbol{\zeta}(t))$ that conforms to the input and state limitations while simultaneously diminishing the tracking error $\left\|\boldsymbol{\zeta}(t)-\boldsymbol{\zeta}_\mathtt{d}\right\| \leq \epsilon$, where $\boldsymbol{\zeta}_\mathtt{d} \in \mathbb{R}^\mathtt{n}$ is a fixed constant state and $\epsilon$ is a positive constant.

\section{Feasible Design of Safe Control}\label{section3}
A time-varying function is formulated to define an invariant set for system \eqref{1} 
\begin{equation}\label{7}
\mathtt{\Theta}(\boldsymbol{\zeta}(t)) := e^{\frac{\mathtt{\theta}(\boldsymbol{\zeta}(t))}{\|\boldsymbol{\zeta}+\boldsymbol{\mathfrak{d}}\|+\mathtt{r}}-\mathtt{\Delta}}-1,
\end{equation}
where ${\mathtt{\Theta}(\boldsymbol{\zeta}(t)) \subset \mathbb{R}}$ and $\mathtt{\Delta} \in (0, +\infty),  \mathtt{r} \in (0, +\infty)$. \par 

\begin{lemma}\label{lemma2}
	Under Assumptions \ref{assumption1} and \ref{assumption2}, the relative degree of $\mathtt{\Theta}(\boldsymbol{\zeta})$ is 1, and $\mathtt{\bar{C}}_{\mathtt{\Theta}}:=\left\{\boldsymbol{\zeta} \in \mathbb{R}^\mathtt{n}: \mathtt{\Theta}(\boldsymbol{\zeta}) \geq 0\right\}$ is forward-invariant.
\end{lemma}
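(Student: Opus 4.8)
The statement has two parts: that $\mathtt{\Theta}$ has relative degree $1$ with respect to \eqref{1}, and that $\mathtt{\bar{C}}_{\mathtt{\Theta}}$ is forward-invariant. \emph{Relative degree.} The plan is to differentiate $\mathtt{\Theta}$ once along the trajectories of \eqref{1} and exhibit the explicit, non-vanishing coefficient of $\boldsymbol{\mathfrak{\upsilon}}$. Writing $\psi(\boldsymbol{\zeta}):=\mathtt{\theta}(\boldsymbol{\zeta})/(\|\boldsymbol{\zeta}+\boldsymbol{\mathfrak{d}}\|+\mathtt{r})$ one has $\dot{\mathtt{\Theta}}=e^{\psi-\mathtt{\Delta}}\dot{\psi}$ with $e^{\psi-\mathtt{\Delta}}>0$. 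In $\dot{\psi}$ the numerator contributes $\dot{\mathtt{\theta}}/(\|\boldsymbol{\zeta}+\boldsymbol{\mathfrak{d}}\|+\mathtt{r})$; by Assumption \ref{assumption1} the relative degree of $\mathtt{\theta}$ is $m\ge2$, so $L_{\mathtt{\hat{G}}}\mathtt{\theta}\equiv 0$ and $\dot{\mathtt{\theta}}=L_{\mathtt{\hat{F}}}\mathtt{\theta}$ carries no control. The denominator contributes a term proportional to $\frac{d}{dt}\|\boldsymbol{\zeta}+\boldsymbol{\mathfrak{d}}\|=(\boldsymbol{\zeta}+\boldsymbol{\mathfrak{d}})^{T}\dot{\boldsymbol{\zeta}}/\|\boldsymbol{\zeta}+\boldsymbol{\mathfrak{d}}\|$, and substituting $\dot{\boldsymbol{\zeta}}=\mathtt{\hat{F}}+\mathtt{\hat{G}}\boldsymbol{\mathfrak{\upsilon}}$ brings in $G_d(\boldsymbol{\zeta})\boldsymbol{\mathfrak{\upsilon}}/\|\boldsymbol{\zeta}+\boldsymbol{\mathfrak{d}}\|$. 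Assembling, $\dot{\mathtt{\Theta}}=L_{\mathtt{\hat{F}}}\mathtt{\Theta}+L_{\mathtt{\hat{G}}}\mathtt{\Theta}\,\boldsymbol{\mathfrak{\upsilon}}$, where $L_{\mathtt{\hat{G}}}\mathtt{\Theta}$ is a strictly positive scalar (assembled from $e^{\psi-\mathtt{\Delta}}$ and powers of $\|\boldsymbol{\zeta}+\boldsymbol{\mathfrak{d}}\|+\mathtt{r}$ and $\|\boldsymbol{\zeta}+\boldsymbol{\mathfrak{d}}\|$) times $-\mathtt{\theta}(\boldsymbol{\zeta})\,G_d(\boldsymbol{\zeta})$. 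To finish I would show $L_{\mathtt{\hat{G}}}\mathtt{\Theta}\neq\boldsymbol{0}$: Assumption \ref{assumption2} gives $G_d(\boldsymbol{\zeta})\neq\boldsymbol{0}$ (which in particular forces $\boldsymbol{\zeta}+\boldsymbol{\mathfrak{d}}\neq\boldsymbol{0}$, so that $\|\boldsymbol{\zeta}+\boldsymbol{\mathfrak{d}}\|$ is well-defined and differentiable along trajectories), while on $\mathtt{\bar{C}}_{\mathtt{\Theta}}$ the condition $\mathtt{\Theta}\ge0$ is equivalent to $\mathtt{\theta}(\boldsymbol{\zeta})\ge\mathtt{\Delta}(\|\boldsymbol{\zeta}+\boldsymbol{\mathfrak{d}}\|+\mathtt{r})>0$, so $\mathtt{\theta}(\boldsymbol{\zeta})\neq0$. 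Hence no factor vanishes and the relative degree of $\mathtt{\Theta}$ is exactly $1$.

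\emph{Forward invariance.} The equivalence $\{\mathtt{\Theta}\ge0\}=\{\mathtt{\theta}\ge\mathtt{\Delta}(\|\boldsymbol{\zeta}+\boldsymbol{\mathfrak{d}}\|+\mathtt{r})\}$ used above also shows $\mathtt{\bar{C}}_{\mathtt{\Theta}}\subseteq\mathtt{\bar{C}}$, so confining a trajectory to $\mathtt{\bar{C}}_{\mathtt{\Theta}}$ automatically enforces the state restriction $\mathtt{\theta}(\boldsymbol{\zeta})\ge0$. By Lemma \ref{lemma1} it then suffices to check that $\mathtt{\Theta}$ can be rendered a BF in the sense of Definition \ref{definition2}, i.e. that some class-$\mathcal{K}$ function $\alpha$ and admissible input give $\dot{\mathtt{\Theta}}+\alpha(\mathtt{\Theta})\ge0$ on $\mathtt{\bar{C}}_{\mathtt{\Theta}}$. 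Since $L_{\mathtt{\hat{G}}}\mathtt{\Theta}$ is a non-zero row vector there, $L_{\mathtt{\hat{G}}}\mathtt{\Theta}\,\boldsymbol{\mathfrak{\upsilon}}$ genuinely depends on $\boldsymbol{\mathfrak{\upsilon}}$ over $\mathtt{\Upsilon}$, so $\mathtt{\Theta}$ qualifies as a CBF per \eqref{5} for a suitable $\alpha$; feeding any locally Lipschitz controller that enforces this inequality into \eqref{1} turns $\mathtt{\Theta}$ into a BF for the closed loop, and Lemma \ref{lemma1} yields forward invariance of $\mathtt{\bar{C}}_{\mathtt{\Theta}}$.

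\emph{Main obstacle.} The crux is the relative-degree bookkeeping: one must see clearly why differentiating the \emph{normalized} quantity $\psi$ collapses the relative degree from $m$ to $1$ — the numerator $\mathtt{\theta}$ remains control-free precisely because $m\ge2$, whereas the Euclidean-norm normalization in the denominator is exactly the device that re-injects $\boldsymbol{\mathfrak{\upsilon}}$ through $G_d(\boldsymbol{\zeta})$ — and, above all, that no cancellation happens, which rests jointly on $\mathtt{\theta}(\boldsymbol{\zeta})>0$ throughout $\mathtt{\bar{C}}_{\mathtt{\Theta}}$ and on Assumption \ref{assumption2}. A secondary technicality is the non-smoothness of $\|\boldsymbol{\zeta}+\boldsymbol{\mathfrak{d}}\|$ at $\boldsymbol{\zeta}=-\boldsymbol{\mathfrak{d}}$, excluded along trajectories by Assumption \ref{assumption2}, together with confirming that $\mathtt{\Theta}$ is then continuously differentiable so that Lemma \ref{lemma1} applies.
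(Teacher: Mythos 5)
The paper does not print a standalone proof of Lemma \ref{lemma2}; the substance of the relative-degree claim is the expansion of $\dot{\mathtt{\Theta}}$ that appears as \eqref{8} in Appendix A (and explicitly for the ACC example in Section \ref{section4}). Your computation reproduces it exactly: $\dot{\mathtt{\Theta}}=(\mathtt{\Theta}+1)\dot{\psi}$ with the control entering only through $\frac{d}{dt}\|\boldsymbol{\zeta}+\boldsymbol{\mathfrak{d}}\|$, giving $L_{\mathtt{\hat{G}}}\mathtt{\Theta}=-\frac{(\mathtt{\Theta}+1)\,\mathtt{\theta}}{(\|\boldsymbol{\zeta}+\boldsymbol{\mathfrak{d}}\|+\mathtt{r})^{2}\|\boldsymbol{\zeta}+\boldsymbol{\mathfrak{d}}\|}\,G_d(\boldsymbol{\zeta})$, and your non-vanishing argument (Assumption \ref{assumption2} for $G_d$ and for $\boldsymbol{\zeta}+\boldsymbol{\mathfrak{d}}\neq\boldsymbol{0}$; the equivalence $\mathtt{\Theta}\ge0\Leftrightarrow\mathtt{\theta}\ge\mathtt{\Delta}(\|\boldsymbol{\zeta}+\boldsymbol{\mathfrak{d}}\|+\mathtt{r})>0$ for $\mathtt{\theta}\neq0$ on $\mathtt{\bar{C}}_{\mathtt{\Theta}}$) is the right one and correctly identifies why the normalization re-injects $\boldsymbol{\mathfrak{\upsilon}}$. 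This half is complete and is essentially the paper's own route.

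The forward-invariance half has one overreach. You assert that because $L_{\mathtt{\hat{G}}}\mathtt{\Theta}$ is a nonzero row vector, ``$\mathtt{\Theta}$ qualifies as a CBF per \eqref{5} for a suitable $\alpha$.'' That inference is valid only if $\mathtt{\Upsilon}$ is unbounded; here $\mathtt{\Upsilon}$ is a box, so $\sup_{\boldsymbol{\mathfrak{\upsilon}}\in\mathtt{\Upsilon}}L_{\mathtt{\hat{G}}}\mathtt{\Theta}\,\boldsymbol{\mathfrak{\upsilon}}$ is finite and the existence of a class-$\mathcal{K}$ function $\alpha$ making \eqref{5} hold is exactly the feasibility question the paper defers to Theorems \ref{theorem2} and \ref{theorem3} --- it is not a consequence of $L_{\mathtt{\hat{G}}}\mathtt{\Theta}\neq\boldsymbol{0}$ alone. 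The clean way to state this part is conditional: \emph{if} the applied control satisfies \eqref{44} for some class-$\mathcal{K}$ $\alpha$, then $\mathtt{\Theta}$ is a BF in the sense of Definition \ref{definition2} for the closed loop and Lemma \ref{lemma1} gives forward invariance of $\mathtt{\bar{C}}_{\mathtt{\Theta}}$; whether such a control exists inside $\mathtt{\Upsilon}$ is settled separately by Theorem \ref{theorem3}. Your observation that $\mathtt{\bar{C}}_{\mathtt{\Theta}}\subseteq\mathtt{\bar{C}}$, so that invariance of the former enforces the original state restriction, is correct and worth keeping.
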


\subsection{Method for Addressing Input and State Constraints While Minimizing Tracking Error} 
Prior to devising a secure control blueprint, it is crucial to explore an approach that guarantees non-interference between input constraints and state limitations whilst simultaneously minimizing tracking discrepancies. In accordance with the input constraint specifications, the control law must adhere to the following stipulations
\begin{equation}\label{45}
\boldsymbol{\mathfrak{\upsilon}}_{\min} \leq \boldsymbol{\mathfrak{\upsilon}}\leq \boldsymbol{\mathfrak{\upsilon}}_{\max}.
\end{equation} \par 
From Lemma \ref{lemma2}, it is easier to ensure $\mathtt{\Theta}(\boldsymbol{\zeta})\geq 0$ than $\mathtt{\theta}(\boldsymbol{\zeta})\geq 0$. Therefore, we can employ a nonlinear control barrier function (NCBF) \eqref{7} to enforce the state restriction
\begin{equation}\label{44}
L_{\mathtt{\hat{F}}} \mathtt{\Theta}(\boldsymbol{\zeta})+L_{\mathtt{\hat{G}}} \mathtt{\Theta}(\boldsymbol{\zeta}) \boldsymbol{\mathfrak{\upsilon}}+\alpha(\mathtt{\Theta}(\boldsymbol{\zeta})) \geq 0.
\end{equation}

Ultimately, the trajectory of the system's state must remain predictable within the vicinity of $\boldsymbol{\zeta}_\mathtt{d}$. To achieve this, we can define a continuously differentiable function $\mathtt{V}(\boldsymbol{\zeta}(t))=\left(\boldsymbol{\zeta}(t)-\boldsymbol{\zeta}_{\mathtt{d}}\right)^{T} \boldsymbol{Z} \left(\boldsymbol{\zeta}(t)-\boldsymbol{\zeta}_{\mathtt{d}}\right), \boldsymbol{Z}$ is a positive-definite matrix. For system \eqref{1}, the time derivative of $\mathtt{V}$, denoted as $\dot{\mathtt{V}}$, is given by $\dot{\mathtt{V}} = \frac{\partial \mathtt{V}}{\partial \boldsymbol{\zeta}} \dot{\boldsymbol{\zeta}} = L_{\mathtt{\hat{F}}} \mathtt{V}(\boldsymbol{\zeta}(t))+L_{\mathtt{\hat{G}}} \mathtt{V}(\boldsymbol{\zeta}(t)) \boldsymbol{\mathfrak{\upsilon}}$. Consequently, there exists a $\mathcal{M}$ ($\mathcal{M}$ is a sufficiently large positive number), such that
\begin{equation}\label{47}
L_{\mathtt{\hat{F}}} \mathtt{V}(\boldsymbol{\zeta}(t))+L_{\mathtt{\hat{G}}} \mathtt{V}(\boldsymbol{\zeta}(t)) \boldsymbol{\mathfrak{\upsilon}}+\mathtt{\chi}_3 \mathtt{V}(\boldsymbol{\zeta}(t)) \leq \mathcal{M}.
\end{equation} \par 

The Comparison Lemma \cite{Khalil2002} implies that 
\begin{equation}\label{48}
\mathtt{V}(\boldsymbol{\zeta}(t)) \leq \frac{\mathcal{M}}{\mathtt{\chi}_3} + \left(\mathtt{V}(\boldsymbol{\zeta}(t_\mathtt{0}))-\frac{\mathcal{M}}{\mathtt{\chi}_3}\right)e^{\mathtt{\chi}_3 (-t+t_\mathtt{0})}.
\end{equation} \par 
Based on the definition of $\mathtt{V}(\boldsymbol{\zeta}(t))$, it follows that
\begin{equation}\label{12}
\lambda_{\min}(\boldsymbol{Z}){\left\|\boldsymbol{\zeta}(t)-\boldsymbol{\zeta}_\mathtt{d}\right\|}^{2} \leq  \mathtt{V}(\boldsymbol{\zeta}(t)) \leq \lambda_{\max}(\boldsymbol{Z}){\left\|\boldsymbol{\zeta}(t)-\boldsymbol{\zeta}_\mathtt{d}\right\|}^{2},
\end{equation}
\noindent where $\lambda_{\min}(\boldsymbol{Z})>0$ and $\lambda_{\max}(\boldsymbol{Z})>0$ denote the minimum and maximum eigenvalues of $\boldsymbol{Z}$, respectively. From \eqref{48} and \eqref{12}, we can derive the inequality
\begin{equation}\label{16}
{\left\|\boldsymbol{\zeta}(t)\!-\!\boldsymbol{\zeta}_\mathtt{d}\right\|} \!\leq\! \sqrt{\frac{1}{\lambda_{\min}(\boldsymbol{Z})}\!\!\left[\! \frac{\mathcal{M}}{\mathtt{\chi}_3}\! +\! \!\left(\!\mathtt{V}(\boldsymbol{\zeta}(t_0))\!-\!\frac{\mathcal{M}}{\mathtt{\chi}_3}\right)e^{\mathtt{\chi}_3 (-t+t_\mathtt{0})} \right]}.
\end{equation} \par 
Let $\mathtt{y}(t) = \frac{\mathcal{M}}{\mathtt{\chi}_3} + \left(\mathtt{V}(\boldsymbol{\zeta}(t_{\mathtt{0}})) - \frac{\mathcal{M}}{\mathtt{\chi}_3}\right)e^{\mathtt{\chi}_3 (-t+t_{\mathtt{0}})}$. We observe that $\mathtt{y}(t) = \frac{\mathcal{M}}{\mathtt{\chi}_3}(1-e^{\mathtt{\chi}_3 (-t+t_{\mathtt{0}})}) + \mathtt{V}(\boldsymbol{\zeta}(t_{\mathtt{0}}))e^{\mathtt{\chi}_3 (-t+t_{\mathtt{0}})} \geq 0$ for $\mathtt{V}(\boldsymbol{\zeta}(t_{\mathtt{0}})) \geq 0$. 

Considering $\dot{\mathtt{y}}(t) = -\mathtt{\chi}_3 \left(\mathtt{V}(\boldsymbol{\zeta}(t_{\mathtt{0}})) - \frac{\mathcal{M}}{\mathtt{\chi}_3}\right)e^{\mathtt{\chi}_3 (-t+t_{\mathtt{0}})}$, we conclude that $\mathtt{y}(t)$ can be seen as a strictly monotonic function of $t$. With $\mathtt{y}(t_{\mathtt{0}}) = \mathtt{V}(\boldsymbol{\zeta}(t_{\mathtt{0}}))$ and $\underset{t\rightarrow \infty}{\lim}\mathtt{y}(t) = \frac{\mathcal{M}}{\mathtt{\chi}_3}$, we deduce that $\mathtt{y}_{\max} = \max\{\mathtt{V}(\boldsymbol{\zeta}(t_{\mathtt{0}})), \frac{\mathcal{M}}{\mathtt{\chi}_3}\}$. Let $\epsilon = \max\{\sqrt{\frac{\mathcal{M}}{\lambda_{\min}(\boldsymbol{Z})\mathtt{\chi}_3}}, \sqrt{\frac{\mathtt{V}(\boldsymbol{\zeta}(t_{\mathtt{0}}))}{\lambda_{\min}(\boldsymbol{Z})}}\}$; thus, the system's state trajectory remains within the vicinity of the desired state. \par
We employed a permissive control Lyapunov function to attain the convergence of the tracking error.
\begin{equation}\label{46}
L_{\mathtt{\hat{F}}} \mathtt{V}(\boldsymbol{\zeta}(t))+L_{\mathtt{\hat{G}}} \mathtt{V}(\boldsymbol{\zeta}(t)) \boldsymbol{\mathfrak{\upsilon}}+\mathtt{\chi}_3 \mathtt{V}(\boldsymbol{\zeta}(t)) \leq \delta(\boldsymbol{\zeta}).
\end{equation}
Here, $\delta(\boldsymbol{\zeta})$ denotes the relaxation (decision variable). \par

\subsection{Design of Safety Controller for General Affine Systems} 
In light of the issues elucidated in Section \ref{PF} and with a view towards scalability, optimal control constitutes a suitable modality for addressing the foregoing predicaments. A notable advantage of inequality-constrained optimal control lies in its ability to harmonize control performance --- as signified by a relaxed CLF $\eqref{46}$) --- with the "safe" set to which the trajectory belongs (as manifested through the NCBF constraints $\eqref{44}$), thereby ensuring that the control signal invariably falls within the prescribed control boundary $\eqref{45}$. The optimal control formulation for system \eqref{1} is expounded below \cite{Xiao2022}
\begin{equation}\label{43}
\begin{array}{c}
\underset{\boldsymbol{\mathfrak{\upsilon}}(t), \delta(\boldsymbol{\zeta})}{\min} \int_{t_\mathtt{0}}^{t_\mathtt{z}} \mathcal{E}(\|\boldsymbol{\mathfrak{\upsilon}}(t)\|)+\mathtt{p}\delta^{2}(\boldsymbol{\zeta}) d t \\
\text { s.t. }\eqref{45}, \eqref{44}, \eqref{46},
\end{array}
\end{equation} 
\noindent where $\|\cdot\|$ represents the $2$-norm. Consider $\mathcal{E}(\cdot)$ as the representation of the energy consumption for system \eqref{1}. Therefore, $\mathcal{E}(\cdot) \geq 0$ and $\mathtt{p}>0, t_\mathtt{z} > 0$. \par 
We discretize time and assume that $\boldsymbol{\zeta}(t) = \boldsymbol{\zeta}(t_\mathtt{\kappa})$ remains constant within each sufficiently small time interval $\left[t_{\mathtt{\kappa}}, t_{\mathtt{\kappa}+1}\right)$, where $\mathtt{\kappa}=0, 1, 2, \ldots, t_{\mathtt{0}}=0$.
If $\int_{t_\mathtt{0}}^{t_\mathtt{z}} \mathcal{E}(\|\boldsymbol{\mathfrak{\upsilon}}(t)\|)+\mathtt{p}\delta^{2}(\boldsymbol{\zeta}) d t$ takes a quadratic form in $\boldsymbol{\mathfrak{\upsilon}}$, then \eqref{43} can be reformulated as a sequence of quadratic program (QP) problems within each time interval \cite{Xiao2022}
\begin{equation}\label{51}
\begin{gathered}
\boldsymbol{\mathfrak{\upsilon}}^{*}(t,\boldsymbol{\zeta})=\arg \min _{\boldsymbol{\mathfrak{\upsilon}}(t,\boldsymbol{\zeta})} \frac{1}{2} \boldsymbol{\mathfrak{\upsilon}}(t,\boldsymbol{\zeta})^{T} \boldsymbol{P} \boldsymbol{\mathfrak{\upsilon}}(t,\boldsymbol{\zeta})+\boldsymbol{G}^{T} \boldsymbol{\mathfrak{\upsilon}}(t,\boldsymbol{\zeta}) \\
\text { s.t. } \boldsymbol{A}_{i}^{T}  \boldsymbol{\mathfrak{\upsilon}}(t,\boldsymbol{\zeta}) \leq \mathtt{\theta}_{i},
\end{gathered}
\end{equation}

\noindent where $\boldsymbol{\mathfrak{\upsilon}}(t,\boldsymbol{\zeta}) := \left[\boldsymbol{\mathfrak{\upsilon}}(t), \delta(\boldsymbol{\zeta})\right]^T$. The symbol $\boldsymbol{P}$ corresponds to a positive-definite matrix. $\boldsymbol{A}_{i}$ and $\mathtt{\theta}_{i}$  represent quantities obtained through the conversion of various inequalities into the form $\boldsymbol{A}_{i}^T\boldsymbol{\mathfrak{\upsilon}}\leq \mathtt{\theta}_{i}$, where all vectors are represented as columns. Additionally, $\boldsymbol{G}^{T}$ signifies a row vector obtained by transposing a column vector. \par 
The QP \eqref{51} can be solved using the interior-point method \cite{Nocedal2006}. The corresponding Lagrangian function for \eqref{51} is given by
$\mathtt{L}(\boldsymbol{\mathfrak{\upsilon}}(t,\boldsymbol{\zeta}), \boldsymbol{L})=\frac{1}{2} \boldsymbol{\mathfrak{\upsilon}}(t,\boldsymbol{\zeta})^{T} \boldsymbol{P} \boldsymbol{\mathfrak{\upsilon}}(t,\boldsymbol{\zeta})+\boldsymbol{G}^{T}\boldsymbol{\mathfrak{\upsilon}}(t,\boldsymbol{\zeta})+\sum_{i=1}^{n} \mathtt{l}_{i}\left(\boldsymbol{A}_{i}^{T} \boldsymbol{\mathfrak{\upsilon}}(t,\boldsymbol{\zeta})-\mathtt{\theta}_{i}\right)$.
The Karush--Kuhn--Tucker condition can be formulated as follows
\begin{equation}\label{37}
\left\{\begin{array}{l}
\boldsymbol{P} \boldsymbol{\mathfrak{\upsilon}}(t,\boldsymbol{\zeta})+ \boldsymbol{G} + \boldsymbol{A}^{T} \boldsymbol{L} =\boldsymbol{0} \\
\boldsymbol{A}_{i}^{T} \boldsymbol{\mathfrak{\upsilon}}(t,\boldsymbol{\zeta}) \leq \mathtt{\theta}_{i} \\
\mathtt{l}_{i}\left(\boldsymbol{A}_{i}^T \boldsymbol{\mathfrak{\upsilon}}(t,\boldsymbol{\zeta})-\mathtt{\theta}_{i}\right)= 0 \\
\mathtt{l}_{i} \geq 0, i=1, \ldots, n.
\end{array}\right.
\end{equation} \par 
We introduce the relaxation variable $\mathtt{s}_{i}$, defined as $\mathtt{s}_{i}=\mathtt{\theta}_{i}-\boldsymbol{A}_{i}^{T} \boldsymbol{\mathfrak{\upsilon}}(t,\boldsymbol{\zeta})$, and then rephrase \eqref{37} as
\begin{equation}\label{38}
\left\{\begin{array}{l}
\boldsymbol{P} \boldsymbol{\mathfrak{\upsilon}}(t,\boldsymbol{\zeta})+\boldsymbol{G}+\boldsymbol{A}^{T} \boldsymbol{L}=\boldsymbol{0}, \\
\boldsymbol{A}_{i}^{T} \boldsymbol{\mathfrak{\upsilon}}(t,\boldsymbol{\zeta})+\mathtt{s}_{i} = \mathtt{\theta}_{i}, \\
\mathtt{s}_{i} \mathtt{l}_{i} = 0, i=1, \ldots, n, 	\\
(\boldsymbol{s}, \boldsymbol{L}) \geq 0.
\end{array}\right.
\end{equation}
After the complementarity measure $\mu := \frac{\boldsymbol{s}^{T} \boldsymbol{L}}{n}$ is defined, then we express \eqref{38} in matrix notation
\begin{equation}\label{39}
\boldsymbol{R}(\boldsymbol{\mathfrak{\upsilon}}(t,\boldsymbol{\zeta}), \boldsymbol{s}, \boldsymbol{L}; \sigma \mu)=\left[\begin{array}{c}
\boldsymbol{P} \boldsymbol{\mathfrak{\upsilon}}(t,\boldsymbol{\zeta})+\boldsymbol{G}+ \boldsymbol{A}^{T} \boldsymbol{L} \\
\boldsymbol{A} \boldsymbol{\mathfrak{\upsilon}}(t,\boldsymbol{\zeta})+\boldsymbol{s}-\boldsymbol{\theta} \\
\boldsymbol{\Lambda} \boldsymbol{S}\boldsymbol{e}-\sigma \mu \boldsymbol{e} 
\end{array}\right]=\boldsymbol{0},
\end{equation}
\noindent where 
\begin{equation} \label{40}
\begin{aligned}
\boldsymbol{\Lambda} &\!\!=\!\!\!\left[\begin{array}{cccc}
\!\!\mathtt{l}_{1} &\!\! 0 & \!\!\ldots &\!\! 0 \\
\!\!0 & \!\!\mathtt{l}_{2} & \!\!\ldots & \!\!0 \\
\!\!\vdots & \!\!\vdots & \!\!\ldots & \!\!\vdots \\
\!\!0 & \!\!0 & \!\!\ldots & \!\!\mathtt{l}_{n}
\end{array}\!\!\!\right]\!, \quad \!\!\!\!\!\boldsymbol{A}\!=\!\!\left[\begin{array}{c}
\!\boldsymbol{A}_{1}^{T}\! \\
\!\boldsymbol{A}_{2}^{T}\! \\
\!\vdots\! \\
\!\boldsymbol{A}_{n}^{T}\!
\end{array}\right]\!, \quad \!\!\!\!\!\boldsymbol{L}=\left[\begin{array}{c}
\mathtt{l}_{1} \\
\mathtt{l}_{2} \\
\vdots \\
\mathtt{l}_{n}
\end{array}\right], \\
\boldsymbol{S} &\!\!=\!\!\!\left[\begin{array}{cccc}
\!\!\mathtt{s}_{1} &\!\! 0 & \!\!\ldots &\!\! 0 \\
\!\!0 & \!\!\mathtt{s}_{2} & \!\!\ldots & \!\!0 \\
\!\!\vdots & \!\!\vdots & \!\!\ldots & \!\!\vdots \\
\!\!0 & \!\!0 & \!\!\ldots & \!\!\mathtt{s}_{n}
\end{array}\!\!\!\right]\!\!, \quad \!\!\!\!\!\! \boldsymbol{s}\!=\!\!\left[\begin{array}{c}
\!\!\mathtt{s}_{1}\!\! \\
\!\!\mathtt{s}_{2}\!\! \\
\!\!\vdots\!\! \\
\!\!\mathtt{s}_{n}\!\!
\end{array}\right]\!\!, \quad \!\!\!\!\!\!\boldsymbol{\theta}\!=\!\!\left[\begin{array}{c}
\!\!\mathtt{\theta}_{1}\!\! \\
\!\!\mathtt{\theta}_{2}\!\! \\
\!\!\vdots\!\! \\
\!\!\mathtt{\theta}_{n}\!\!
\end{array}\right]\!\!, \quad \!\!\!\!\!\!\boldsymbol{e}\!=\!\!\left[\begin{array}{c}
\!\!{1}\!\! \\
\!\!{1}\!\! \\
\!\!\vdots\!\! \\
\!\!{1}\!\!
\end{array}\right],
\end{aligned}
\end{equation}
with $\sigma \in [0, 1]$. By fixing $\mu$ and applying Newton's method to solve \eqref{39}, we obtain $\boldsymbol{R}\left(\boldsymbol{\mathfrak{\upsilon}}, \boldsymbol{s}, \boldsymbol{L}; \sigma \mu\right)+\boldsymbol{R}^{\prime}\left(\boldsymbol{\mathfrak{\upsilon}}, \boldsymbol{s}, \boldsymbol{L}; \sigma \mu\right)\left(\Delta \boldsymbol{\mathfrak{\upsilon}}, \Delta \boldsymbol{s}, \Delta \boldsymbol{L}\right)=\boldsymbol{0}$, that is,
\begin{equation}\label{41}
\left[\!\! \begin{array}{cccc}
\boldsymbol{P} & \boldsymbol{0} & \boldsymbol{A}^{T} \\
\boldsymbol{A} & \boldsymbol{I} & \boldsymbol{0} \\
\boldsymbol{0} & \boldsymbol{\Lambda} & \boldsymbol{S}^{T}
\end{array}\!\!\right]\!\! \left[\!\! \begin{array}{c}
\Delta \boldsymbol{\mathfrak{\upsilon}} \\
\Delta \boldsymbol{s} \\
\Delta \boldsymbol{L}
\end{array}\!\!\right] \!\! =\! - \!\! \left[\!\! \begin{array}{c}
\boldsymbol{P} \boldsymbol{\mathfrak{\upsilon}}+\boldsymbol{G}+ \boldsymbol{A}^{T} \boldsymbol{L}\\
\boldsymbol{A} \boldsymbol{\mathfrak{\upsilon}}+\boldsymbol{s} - \boldsymbol{\theta}\\
\boldsymbol{\Lambda} \boldsymbol{S} \boldsymbol{e} -\sigma \mu \boldsymbol{e}
\end{array}\!\!\right]\!\!.
\end{equation} \par 
Subsequently, the search direction is calculated by solving 
\begin{equation}\label{50}
\left[ \begin{array}{cccc}
\!\boldsymbol{P} & \!\!\boldsymbol{0} & \!\!\boldsymbol{A}^{T}\! \\
\!\boldsymbol{A} & \!\!\boldsymbol{I} & \!\!\boldsymbol{0} \\
\!\boldsymbol{0}\! & \!\!\boldsymbol{\Lambda} & \!\!\boldsymbol{S}^{T}\!
\end{array}\!\!\right]\!\! \!\left[\!\!\! \begin{array}{c}
\!\Delta \boldsymbol{\mathfrak{\upsilon}} \\
\!\Delta \boldsymbol{s} \\
\!\Delta \boldsymbol{L}
\end{array}\!\!\!\right] \!\! =\! - \!\! \left[\!\! \begin{array}{c}
\boldsymbol{P} \boldsymbol{\mathfrak{\upsilon}}+\boldsymbol{G}+ \boldsymbol{A}^{T} \boldsymbol{L}\\
\boldsymbol{A} \boldsymbol{\mathfrak{\upsilon}}+\boldsymbol{s} - \boldsymbol{\theta}\\
\boldsymbol{\Lambda} \boldsymbol{S} \boldsymbol{e} + \Delta \boldsymbol{\Lambda}^{\mathrm {aff }} \Delta \boldsymbol{S}^{\mathrm {aff }} \boldsymbol{e} -\sigma \mu \boldsymbol{e}
\end{array}\!\!\right]\!\!.
\end{equation} \par 
By setting $\sigma = 0$ and solving \eqref{41}, $\Delta \boldsymbol{\Lambda}^{\mathrm {aff }}$ and $ \Delta \boldsymbol{S}^{\mathrm {aff }}$ can be obtained.

We can then determine $\left(\Delta \boldsymbol{\mathfrak{\upsilon}}, \Delta \boldsymbol{s}, \Delta \boldsymbol{L}\right)$, update the variables $\left(\boldsymbol{\mathfrak{\upsilon}}^{+}, \boldsymbol{s}^{+}, \boldsymbol{L}^{+}\right) = \left(\boldsymbol{\mathfrak{\upsilon}}, \boldsymbol{s}, \boldsymbol{L}\right) + \beta\left(\Delta \boldsymbol{\mathfrak{\upsilon}}, \Delta \boldsymbol{s}, \Delta \boldsymbol{L}\right)$ (where $\beta$ is chosen to ensure that $\left(\boldsymbol{s}^{+}, \boldsymbol{L}^{+}\right) > 0$), and iteratively update $\sigma$ and $\mu$ until a solution to the equations is obtained. The algorithm's instructions are outlined in \cite{Nocedal2006}. \par 
\begin{table}[htbp]
	\centering  
	\begin{tabular}{l}
		\toprule[1.5pt]           
		\textbf{Algorithm 1} Predictor-Corrector Algorithm for QP.  \\
		\midrule
		Compute $(\boldsymbol{\mathfrak{\upsilon}}(t_{\mathtt{0}}), \boldsymbol{s}(t_{\mathtt{0}}), \boldsymbol{L}(t_{\mathtt{0}}))$ with $(\boldsymbol{s}_{\mathtt{0}}, \boldsymbol{L}_{\mathtt{0}}) > 0$.\\
		\;1: \textbf{while} $t_{\mathtt{\kappa}}<t_{\mathtt{z}}$ \textbf{do}\\
		\;2: Set $(\boldsymbol{\mathfrak{\upsilon}}, \boldsymbol{s}, \boldsymbol{L}) = (\boldsymbol{\mathfrak{\upsilon}}_{\mathtt{\kappa}}, \boldsymbol{s}_{\mathtt{\kappa}}, \boldsymbol{L}_{\mathtt{\kappa}})$ and solve \eqref{41} with $\sigma = 0$ for \\ 
		\;\;\;\;\;\;\;\;\;\;\;\;\;\;\;$\left(\Delta \boldsymbol{\mathfrak{\upsilon}}^{\mathrm {aff }}, \Delta \boldsymbol{s}^{\mathrm {aff }}, \Delta \boldsymbol{L}^{\mathrm {aff }}\right)$;\\
		
		\;3: Calculate $\mu = \boldsymbol{s}^{T}\boldsymbol{L} / n$;\\
		\;4: Calculate $\hat{\beta}_{\mathrm {aff }} \!=\! \max\{\!\beta\! \in (0, 1] \;\!\! | \!\!\; (\boldsymbol{s}, \boldsymbol{L}) \!+\! \beta (\!\Delta \boldsymbol{s}^{\mathrm {aff }}, \Delta \boldsymbol{L}^{\mathrm {aff }}\!) \!\geq\! 0 \!\}$;\\
		\;5: Calculate $\mu_{\mathrm{aff}}=\left(\boldsymbol{s}+\hat{\beta}_{\mathrm{aff}} \Delta \boldsymbol{s}^{\mathrm{aff}}\right)^T\left(\boldsymbol{L}+\hat{\beta}_{\mathrm{aff}} \Delta \boldsymbol{L}^{\mathrm{aff}}\right) / n $; \\
		\;6: Set centering parameter to $\sigma = \left(\mu_{\mathrm{aff}} / \mu\right)^{3}$; \\
		\;7: Solve \eqref{50} for $\left(\Delta \boldsymbol{\mathfrak{\upsilon}}, \Delta \boldsymbol{s}, \Delta \boldsymbol{L}\right)$; \\
		\;8: Choose $\tau_{\mathtt{\kappa}} \in (0, 1)$ and set $\hat{\beta} = \min \left(\beta_{\tau_{\mathtt{\kappa}}}^{\text {pri }}, \beta_{\tau_{\mathtt{\kappa}}}^{\text {dual }}\right)$, where \\
		\;\;\;\;\;\;\;\;\;\;$\begin{aligned}
		\beta_{\tau}^{\text {pri }} & =\max \{\beta \in(0,1]: \boldsymbol{s}+\beta \Delta \boldsymbol{s} \geq(1-\tau) \boldsymbol{s}\}, \\
		\;\;\;\;\beta_{\tau}^{\text {dual }} & =\max \{\beta \in(0,1]: \boldsymbol{L}+\beta \Delta \boldsymbol{L} \geq(1-\tau) \boldsymbol{L}\};
		\end{aligned}
		$ \\
		\;9: Set $\left(\boldsymbol{\mathfrak{\upsilon}}_{\mathtt{\kappa}+1}, \boldsymbol{s}_{\mathtt{\kappa}+1}, \boldsymbol{L}_{\mathtt{\kappa}+1}\right)=\left(\boldsymbol{\mathfrak{\upsilon}}_{\mathtt{\kappa}}, \boldsymbol{s}_{\mathtt{\kappa}}, \boldsymbol{L}_{\mathtt{\kappa}}\right)+\hat{\beta}(\Delta \boldsymbol{\mathfrak{\upsilon}}, \Delta \boldsymbol{s}, \Delta \boldsymbol{L});$ \\
		10: $\mathtt{\kappa}=\mathtt{\kappa}+1$;\\
		11: \textbf{end while}\\
		\bottomrule[1.5pt]  
	\end{tabular}
\end{table}

\subsection{Feasibility Analysis of Control Strategy} 
To assess the feasibility of the control strategy \eqref{43}, we expand  the expression given by \eqref{44},
\begin{equation}\label{8}
\begin{aligned}
&\frac{\mathtt{\Theta}(\boldsymbol{\zeta})+1}{(\|\boldsymbol{\zeta}+\boldsymbol{\mathfrak{d}}\|+\mathtt{r})^2}\!\! \left(\!\!(\|\boldsymbol{\zeta}+\boldsymbol{\mathfrak{d}}\|\!\!+\!\mathtt{r}) L_\mathtt{\hat{F}} \mathtt{\theta}(\boldsymbol{\zeta}) \!\! - \!\! \frac{\mathtt{\theta}(\boldsymbol{\zeta})(\boldsymbol{\zeta}+\boldsymbol{\mathfrak{d}})^T  \mathtt{\hat{F}}(\boldsymbol{\zeta})}{\|\boldsymbol{\zeta}+\boldsymbol{\mathfrak{d}}\|}\right) \\
&+ \! \frac{-(\mathtt{\Theta}(\boldsymbol{\zeta})+1) \mathtt{\theta}(\boldsymbol{\zeta})}{(\|\boldsymbol{\zeta}+\boldsymbol{\mathfrak{d}}\|+\mathtt{r})^2\|\boldsymbol{\zeta}+\boldsymbol{\mathfrak{d}}\|}(\boldsymbol{\zeta}+\boldsymbol{\mathfrak{d}})^T \mathtt{\hat{G}}(\boldsymbol{\zeta}) \boldsymbol{\mathfrak{\upsilon}}+\alpha(\mathtt{\Theta}(\boldsymbol{\zeta})) \geq 0.
\end{aligned}
\end{equation} \par

\noindent we rewrite \eqref{8}
\begin{equation}\label{9}
\begin{aligned}
&\frac{(\mathtt{\Theta}(\boldsymbol{\zeta})+1) \mathtt{\theta}(\boldsymbol{\zeta})}{(\|\boldsymbol{\zeta}+\boldsymbol{\mathfrak{d}}\|+\mathtt{r})^2\|\boldsymbol{\zeta}+\boldsymbol{\mathfrak{d}}\|}(\boldsymbol{\zeta}+\boldsymbol{\mathfrak{d}})^T \mathtt{\hat{G}}(\boldsymbol{\zeta}) \boldsymbol{\mathfrak{\upsilon}} \leq \alpha(\mathtt{\Theta}(\boldsymbol{\zeta}))+ \\
&\frac{\mathtt{\Theta}(\boldsymbol{\zeta})+1}{(\|\boldsymbol{\zeta}+\boldsymbol{\mathfrak{d}}\|+\mathtt{r})^2} \!\! \left( \!\! (\|\boldsymbol{\zeta} \! + \! \boldsymbol{\mathfrak{d}}\| \!+ \! \mathtt{r}) L_\mathtt{\hat{F}} \mathtt{\theta}(\boldsymbol{\zeta})  \!  - \! \frac{\mathtt{\theta}(\boldsymbol{\zeta})(\boldsymbol{\zeta} \! + \! \boldsymbol{\mathfrak{d}})^T \mathtt{\hat{F}}(\boldsymbol{\zeta})}{\|\boldsymbol{\zeta} \! + \! \boldsymbol{\mathfrak{d}}\|}\!\right)\!\!.
\end{aligned}
\end{equation} \par 

Based on the definition of $G_d(\boldsymbol{\zeta}):=\left(\boldsymbol{\zeta}+\boldsymbol{\mathfrak{d}}\right)^T \mathtt{\hat{G}}(\boldsymbol{\zeta})$, we proceed with a straightforward shift
\begin{equation}\label{10}
\begin{split}
G_d(\boldsymbol{\zeta}) \boldsymbol{\mathfrak{\upsilon}} \leq &\frac{(\|\boldsymbol{\zeta}+\boldsymbol{\mathfrak{d}}\|+\mathtt{r})\|\boldsymbol{\zeta}+\boldsymbol{\mathfrak{d}}\| L_\mathtt{\hat{F}} \mathtt{\theta}(\boldsymbol{\zeta})}{\mathtt{\theta}(\boldsymbol{\zeta})}-\\
&(\boldsymbol{\zeta} \! + \! \boldsymbol{\mathfrak{d}})^T \! \mathtt{\hat{F}}(\boldsymbol{\zeta}) \! + \! \frac{(\|\boldsymbol{\zeta}\! + \! \boldsymbol{\mathfrak{d}}\|+\mathtt{r})^2\|\boldsymbol{\zeta}+\boldsymbol{\mathfrak{d}}\| \alpha(\mathtt{\Theta}(\boldsymbol{\zeta}))}{(\mathtt{\Theta}(\boldsymbol{\zeta})+1) \mathtt{\theta}(\boldsymbol{\zeta})}.
\end{split}
\end{equation} \par
We then define $\left|G_{d}(\boldsymbol{\zeta})\right|:=\left(\left|G_{d1}(\boldsymbol{\zeta})\right|, \ldots, \left|G_{dq}(\boldsymbol{\zeta})\right|\right) \geq \boldsymbol{0}$. 

\begin{theorem}\label{theorem2}
	The designed control strategy could be feasible if there exists a class $\mathcal{K}$ function $\alpha(\mathtt{\Theta})$ such that 
	$$
	\begin{aligned}
	\frac{(\mathtt{\Theta}+1) \mathtt{\theta}}{(\|\boldsymbol{\zeta}\!+\!\boldsymbol{\mathfrak{d}}\| \!+ \! \mathtt{r})^2\|\boldsymbol{\zeta} \! + \! \boldsymbol{\mathfrak{d}}\|}[\left|G_{d}\right| \boldsymbol{\mathfrak{\upsilon}}_{\min } \! + \!(\boldsymbol{\zeta} \! + \! \boldsymbol{\mathfrak{d}})^T \mathtt{\hat{F}}(\boldsymbol{\zeta})-\\
	\frac{(\|\boldsymbol{\zeta}+\boldsymbol{\mathfrak{d}}\|+\mathtt{r})\|\boldsymbol{\zeta}+\boldsymbol{\mathfrak{d}}\| L_\mathtt{\hat{F}} \mathtt{\theta}}{\mathtt{\theta}}] \leq \alpha(\mathtt{\Theta}).
	\end{aligned}
	$$
\end{theorem}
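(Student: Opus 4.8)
The plan is to prove that the quadratic program \eqref{43} (equivalently the per-step QP \eqref{51}) is solvable at every reachable state, which is what ``feasibility of the control strategy'' means here: it suffices to exhibit one admissible pair $(\boldsymbol{\mathfrak{\upsilon}},\delta)$ satisfying \eqref{45}, \eqref{44} and \eqref{46} whenever $\boldsymbol{\zeta}\in\mathtt{\bar{C}}_{\mathtt{\Theta}}$. First I would dispose of the CLF constraint: its relaxation $\delta(\boldsymbol{\zeta})$ is a free decision variable and, by \eqref{47}, $L_{\mathtt{\hat{F}}}\mathtt{V}+L_{\mathtt{\hat{G}}}\mathtt{V}\boldsymbol{\mathfrak{\upsilon}}+\mathtt{\chi}_3\mathtt{V}$ is finite for any $\boldsymbol{\mathfrak{\upsilon}}\in\mathtt{\Upsilon}$, so for any admissible control one may set $\delta(\boldsymbol{\zeta})$ equal to this value and \eqref{46} holds. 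Hence the QP is feasible iff there exists $\boldsymbol{\mathfrak{\upsilon}}\in\mathtt{\Upsilon}$ satisfying the NCBF constraint \eqref{44}.

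Next I would reuse the manipulations \eqref{8}--\eqref{10}, checking they are legitimate on $\mathtt{\bar{C}}_{\mathtt{\Theta}}$. Because $\mathtt{\Theta}(\boldsymbol{\zeta})\ge 0$ forces $\mathtt{\theta}(\boldsymbol{\zeta})\ge\mathtt{\Delta}(\|\boldsymbol{\zeta}+\boldsymbol{\mathfrak{d}}\|+\mathtt{r})>0$, and because Assumption \ref{assumption2} (through $G_d(\boldsymbol{\zeta})\neq\boldsymbol{0}$) gives $\|\boldsymbol{\zeta}+\boldsymbol{\mathfrak{d}}\|>0$, every denominator in \eqref{8}--\eqref{10} is nonzero and the scalar multiplier $\frac{(\mathtt{\Theta}+1)\mathtt{\theta}}{(\|\boldsymbol{\zeta}+\boldsymbol{\mathfrak{d}}\|+\mathtt{r})^2\|\boldsymbol{\zeta}+\boldsymbol{\mathfrak{d}}\|}$ is strictly positive, so dividing by it preserves inequality directions. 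Thus \eqref{44} is equivalent to the affine-in-$\boldsymbol{\mathfrak{\upsilon}}$ inequality \eqref{10}, written compactly as $G_d(\boldsymbol{\zeta})\boldsymbol{\mathfrak{\upsilon}}\le c(\boldsymbol{\zeta})$ with $c(\boldsymbol{\zeta})$ the right-hand side of \eqref{10}.

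Then I would solve the box-constrained feasibility test. A linear inequality $G_d\boldsymbol{\mathfrak{\upsilon}}\le c$ is satisfiable over $\mathtt{\Upsilon}$ iff $\min_{\boldsymbol{\mathfrak{\upsilon}}\in\mathtt{\Upsilon}}G_d\boldsymbol{\mathfrak{\upsilon}}\le c$; the minimizer is the vertex $\boldsymbol{\mathfrak{\upsilon}}^{\star}$ that picks $\boldsymbol{\mathfrak{\upsilon}}_{\min}$ in the coordinates where $G_d\ge 0$ and $\boldsymbol{\mathfrak{\upsilon}}_{\max}$ elsewhere, and its value is bounded above by $|G_d(\boldsymbol{\zeta})|\,\boldsymbol{\mathfrak{\upsilon}}_{\min}$ (with equality when $G_d$ is componentwise nonnegative, as for the ACC plant). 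Multiplying the hypothesis of the theorem by the positive factor $\frac{(\|\boldsymbol{\zeta}+\boldsymbol{\mathfrak{d}}\|+\mathtt{r})^2\|\boldsymbol{\zeta}+\boldsymbol{\mathfrak{d}}\|}{(\mathtt{\Theta}+1)\mathtt{\theta}}$ and rearranging yields precisely $|G_d(\boldsymbol{\zeta})|\,\boldsymbol{\mathfrak{\upsilon}}_{\min}\le c(\boldsymbol{\zeta})$, hence $G_d(\boldsymbol{\zeta})\boldsymbol{\mathfrak{\upsilon}}^{\star}\le c(\boldsymbol{\zeta})$, so $\boldsymbol{\mathfrak{\upsilon}}^{\star}\in\mathtt{\Upsilon}$ satisfies \eqref{10} and therefore \eqref{44}. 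Pairing $\boldsymbol{\mathfrak{\upsilon}}^{\star}$ with $\delta(\boldsymbol{\zeta})=L_{\mathtt{\hat{F}}}\mathtt{V}+L_{\mathtt{\hat{G}}}\mathtt{V}\boldsymbol{\mathfrak{\upsilon}}^{\star}+\mathtt{\chi}_3\mathtt{V}$ produces a feasible point of \eqref{43}/\eqref{51}; since $\boldsymbol{\zeta}(t_{\mathtt{\kappa}})\in\mathtt{\bar{C}}_{\mathtt{\Theta}}$ for all $\mathtt{\kappa}$ by Lemma \ref{lemma2}, every QP solved inside Algorithm 1 is feasible.

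The step I expect to be the crux is converting the box-constrained linear feasibility condition into the displayed scalar inequality involving $|G_d|\boldsymbol{\mathfrak{\upsilon}}_{\min}$: one must be careful about the sign pattern of the actuator limits (the term $|G_d|\boldsymbol{\mathfrak{\upsilon}}_{\min}$ captures the adverse vertex exactly only when $G_d$ is sign-definite, so that the binding bound on the active coordinates is $\boldsymbol{\mathfrak{\upsilon}}_{\min}$), and one must verify that the singular-looking quotients in \eqref{8}--\eqref{10} are genuinely regular on $\mathtt{\bar{C}}_{\mathtt{\Theta}}$ --- this is where $\mathtt{\theta}\ge\mathtt{\Delta}(\|\boldsymbol{\zeta}+\boldsymbol{\mathfrak{d}}\|+\mathtt{r})$ and Assumption \ref{assumption2} do the real work. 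Everything else is rearranging \eqref{10} and using that $\alpha$ is class $\mathcal{K}$ in the sense of Definition \ref{definition6}.
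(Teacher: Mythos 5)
Your skeleton coincides with the paper's Appendix~A: expand \eqref{44} into \eqref{8}--\eqref{10}, observe that the multiplier $\frac{(\mathtt{\Theta}+1)\mathtt{\theta}}{(\|\boldsymbol{\zeta}+\boldsymbol{\mathfrak{d}}\|+\mathtt{r})^2\|\boldsymbol{\zeta}+\boldsymbol{\mathfrak{d}}\|}$ is positive so that the NCBF constraint becomes a single affine inequality $G_d\boldsymbol{\mathfrak{\upsilon}}\le c(\boldsymbol{\zeta})$, and test its compatibility with the box $\mathtt{\Upsilon}$ at the adverse vertex, which for $G_d\ge\boldsymbol{0}$ gives exactly the displayed condition (the paper's \eqref{15}/\eqref{11}). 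Two things you add are genuinely worthwhile: you dispose of the CLF constraint explicitly by absorbing it into the slack $\delta$ (the paper leaves this implicit), and you justify the regularity of the quotients via $\mathtt{\Theta}\ge 0\Rightarrow\mathtt{\theta}\ge\mathtt{\Delta}(\|\boldsymbol{\zeta}+\boldsymbol{\mathfrak{d}}\|+\mathtt{r})>0$ together with Assumption~\ref{assumption2}, which the paper never states.

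The one step that does not go through as written is the claim that $\min_{\boldsymbol{\mathfrak{\upsilon}}\in\mathtt{\Upsilon}}G_d\boldsymbol{\mathfrak{\upsilon}}$ ``is bounded above by $\left|G_d\right|\boldsymbol{\mathfrak{\upsilon}}_{\min}$.'' For a coordinate with $G_{d\mathfrak{i}}<0$ the adverse vertex contributes $G_{d\mathfrak{i}}\mathtt{\upsilon}_{\mathfrak{i},\max}=-\left|G_{d\mathfrak{i}}\right|\mathtt{\upsilon}_{\mathfrak{i},\max}$, whereas your bound contributes $\left|G_{d\mathfrak{i}}\right|\mathtt{\upsilon}_{\mathfrak{i},\min}=-\left|G_{d\mathfrak{i}}\right|\left|\mathtt{\upsilon}_{\mathfrak{i},\min}\right|$; if $\mathtt{\upsilon}_{\mathfrak{i},\max}<\left|\mathtt{\upsilon}_{\mathfrak{i},\min}\right|$ (braking stronger than acceleration, a common asymmetry) the true minimum strictly exceeds $\left|G_d\right|\boldsymbol{\mathfrak{\upsilon}}_{\min}$, and then the hypothesis $\left|G_d\right|\boldsymbol{\mathfrak{\upsilon}}_{\min}\le c(\boldsymbol{\zeta})$ no longer implies $\min_{\boldsymbol{\mathfrak{\upsilon}}\in\mathtt{\Upsilon}}G_d\boldsymbol{\mathfrak{\upsilon}}\le c(\boldsymbol{\zeta})$ --- the implication breaks in exactly the direction you need. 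This is precisely the situation the paper's Case~1 (symmetric bounds, where $\left|G_d\right|\boldsymbol{\mathfrak{\upsilon}}_{\min}$ equals the box minimum identically) and Case~2 (shrinking to the symmetric box of half-width $\mathtt{\upsilon}_{\mathfrak{i},\lim}=\min\{\left|\mathtt{\upsilon}_{\mathfrak{i},\min}\right|,\mathtt{\upsilon}_{\mathfrak{i},\max}\}$) are there to handle, so your parenthetical hedge about sign-definiteness is not a side remark but the load-bearing hypothesis: either restrict to $G_d\ge\boldsymbol{0}$ (which suffices for the ACC application, where $G_d=(\mathtt{v}+d_1)/\mathtt{M}>0$), or replicate the paper's case split. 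For the sign-definite case your argument is complete and, in my view, cleaner than the paper's figure-based reasoning.
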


\begin{proof}
	See Appendix A.
\end{proof} \par 
To our knowledge, this marks the inaugural occasion where necessary and essential conditions have been expounded to establish the feasibility of the QPs \eqref{51}, which significantly diverge from the sufficient conditions proposed in \cite{Xiao2022}. 

The necessary and sufficient conditions are presented in Theorem \ref{theorem2}. Subsequently, we define $$Y(\boldsymbol{\zeta}) \!= \!
\frac{\left[ \left|G_{d}\right| \boldsymbol{\mathfrak{\upsilon}}_{\min } \!\! + \!\!(\boldsymbol{\zeta} \!\! +\! \! \boldsymbol{\mathfrak{d}})^T \mathtt{\hat{F}} \right]\!\mathtt{\theta}\!-\!(\|\boldsymbol{\zeta}\!+\!\boldsymbol{\mathfrak{d}}\|\!+\!\mathtt{r})\|\boldsymbol{\zeta}\!+\!\boldsymbol{\mathfrak{d}}\| L_\mathtt{\hat{F}} \mathtt{\theta}}{\mathtt{\Theta}(\|\boldsymbol{\zeta}\!+\!\boldsymbol{\mathfrak{d}}\| \!+ \! \mathtt{r})^2\|\boldsymbol{\zeta} \! + \! \boldsymbol{\mathfrak{d}}\|}.
$$

\begin{theorem}\label{theorem3} 
	If $\exists N>0 $ ($N$ is a positive real number), such that $ \underset{\boldsymbol{\zeta} \in \mathtt{\bar{C}}}{\max} \{Y(\boldsymbol{\zeta})\} \leq N$, then the existence of class $\mathcal{K}$ functions $\alpha(\mathtt{\Theta})$ satisfies \eqref{11}, that is, the necessary and sufficient condition exists.
\end{theorem}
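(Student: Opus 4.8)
The plan is to rewrite condition \eqref{11} of Theorem~\ref{theorem2} as a single scalar inequality in $Y(\boldsymbol{\zeta})$ and $\mathtt{\Theta}$, and then exhibit an explicit class $\mathcal{K}$ function that satisfies it. The first step is an algebraic identification. Distributing the factor $(\mathtt{\Theta}+1)\mathtt{\theta}\big/\big[(\|\boldsymbol{\zeta}+\boldsymbol{\mathfrak{d}}\|+\mathtt{r})^2\|\boldsymbol{\zeta}+\boldsymbol{\mathfrak{d}}\|\big]$ over the bracket on the left-hand side of \eqref{11}, the numerator becomes $\mathtt{\theta}\,|G_d|\boldsymbol{\mathfrak{\upsilon}}_{\min}+\mathtt{\theta}(\boldsymbol{\zeta}+\boldsymbol{\mathfrak{d}})^T\mathtt{\hat{F}}-(\|\boldsymbol{\zeta}+\boldsymbol{\mathfrak{d}}\|+\mathtt{r})\|\boldsymbol{\zeta}+\boldsymbol{\mathfrak{d}}\|L_{\mathtt{\hat{F}}}\mathtt{\theta}$, which is exactly the numerator defining $Y(\boldsymbol{\zeta})$. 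Comparing with the definition of $Y$ one obtains the clean identity that the left-hand side of \eqref{11} equals $\mathtt{\Theta}(\mathtt{\Theta}+1)\,Y(\boldsymbol{\zeta})$; the $\mathtt{\Theta}$ appearing in the denominator of $Y$ is cancelled by a $\mathtt{\Theta}$-factor, so the rewritten \eqref{11} remains well posed even where $\mathtt{\Theta}=0$ although $Y$ itself is singular there. Hence \eqref{11} is equivalent to asking for a class $\mathcal{K}$ function $\alpha$ with $\mathtt{\Theta}(\mathtt{\Theta}+1)\,Y(\boldsymbol{\zeta})\le\alpha(\mathtt{\Theta})$ on the safe set $\mathtt{\bar{C}}_{\mathtt{\Theta}}$.

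With this reduction, the hypothesis $\max_{\boldsymbol{\zeta}\in\mathtt{\bar{C}}}Y(\boldsymbol{\zeta})\le N$ makes the construction transparent: take $\alpha(\mathtt{\Theta}):=N\,\mathtt{\Theta}(\mathtt{\Theta}+1)=N\mathtt{\Theta}^2+N\mathtt{\Theta}$ (or $N\mathtt{\Theta}(\mathtt{\Theta}+1)+\varepsilon\mathtt{\Theta}$ with $\varepsilon>0$ if a strict margin is desired). This $\alpha$ is class $\mathcal{K}$: it is continuous, $\alpha(0)=0$, and $\alpha'(\mathtt{\Theta})=N(2\mathtt{\Theta}+1)>0$, so it is strictly increasing on $[0,\infty)$. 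On $\mathtt{\bar{C}}_{\mathtt{\Theta}}$ one has $\mathtt{\Theta}\ge0$, hence $\mathtt{\Theta}(\mathtt{\Theta}+1)\ge0$, so multiplying the scalar bound $Y(\boldsymbol{\zeta})\le N$ through by this nonnegative factor yields $\mathtt{\Theta}(\mathtt{\Theta}+1)Y(\boldsymbol{\zeta})\le\mathtt{\Theta}(\mathtt{\Theta}+1)N=\alpha(\mathtt{\Theta})$, which is precisely \eqref{11}. Since $\mathtt{\theta}$ is bounded and $\|\boldsymbol{\zeta}+\boldsymbol{\mathfrak{d}}\|+\mathtt{r}\ge\mathtt{r}>0$, the value $\mathtt{\Theta}$ lies in a bounded interval on $\mathtt{\bar{C}}_{\mathtt{\Theta}}$, so one may equally pick the simpler linear $\alpha(\mathtt{\Theta})=N(1+\overline{\mathtt{\Theta}})\mathtt{\Theta}$ with $\overline{\mathtt{\Theta}}:=\sup_{\mathtt{\bar{C}}_{\mathtt{\Theta}}}\mathtt{\Theta}$; combined with Theorem~\ref{theorem2}, the existence of such an $\alpha$ certifies feasibility of the control strategy.

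I expect the main obstacle to be the bookkeeping in the first step — verifying carefully that clearing denominators turns the left-hand side of \eqref{11} into exactly $\mathtt{\Theta}(\mathtt{\Theta}+1)Y$ — together with being explicit that \eqref{11} is only imposed on $\mathtt{\bar{C}}_{\mathtt{\Theta}}=\{\mathtt{\Theta}\ge0\}$, so that the sign of $\mathtt{\Theta}(\mathtt{\Theta}+1)$ is under control when the scalar bound is multiplied through. On the complement $\{\mathtt{\Theta}<0\}$ one has $-1<\mathtt{\Theta}<0$, hence $\mathtt{\Theta}(\mathtt{\Theta}+1)<0$ and the multiplication would reverse, but that region lies outside the set being rendered invariant and is irrelevant to feasibility. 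A secondary point deserving a sentence is continuity at $\mathtt{\Theta}=0$: finiteness of $\max Y$ already forces the numerator of $Y$ to vanish at least as fast as $\mathtt{\Theta}$ there, so the identity and the bound extend to $\partial\mathtt{\bar{C}}_{\mathtt{\Theta}}$ by continuity, where both sides of \eqref{11} vanish.
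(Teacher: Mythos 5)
Your proposal is correct, and it takes a genuinely different --- and more economical --- route than the paper. Your argument rests on the algebraic identity that the left-hand side of \eqref{11} equals $\mathtt{\Theta}(\mathtt{\Theta}+1)\,Y(\boldsymbol{\zeta})$ (which checks out: clearing the $\mathtt{\theta}$ in the bracket reproduces exactly the numerator of $Y$, and the $\mathtt{\Theta}$ in the denominator of $Y$ is absorbed), after which the explicit choice $\alpha(\mathtt{\Theta})=N\mathtt{\Theta}(\mathtt{\Theta}+1)$ settles the claim on $\mathtt{\bar{C}}_{\mathtt{\Theta}}$; this uses the hypothesis $\max_{\boldsymbol{\zeta}\in\mathtt{\bar{C}}}Y(\boldsymbol{\zeta})\le N$ exactly as given and nothing more. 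The paper instead devotes most of its proof to a dynamical argument: it posits that the closed loop eventually enforces $\dot{\mathtt{\Theta}}+\alpha(\mathtt{\Theta})=0$, constructs the Lyapunov function $\mathtt{V}_e(\boldsymbol{\zeta})=\frac{1}{2}(\boldsymbol{\zeta}+\boldsymbol{\mathfrak{d}})^T(\boldsymbol{\zeta}+\boldsymbol{\mathfrak{d}})+L\mathtt{\Theta}(\boldsymbol{\zeta})$ to argue the state settles at an equilibrium where $G_d(\boldsymbol{\zeta})\boldsymbol{\mathfrak{\upsilon}}=-(\boldsymbol{\zeta}+\boldsymbol{\mathfrak{d}})^T\mathtt{\hat{F}}(\boldsymbol{\zeta})$, and concludes $\lim_{\mathtt{\Theta}\to 0}Y(\boldsymbol{\zeta})\le 0$ --- in effect justifying that the antecedent $\max Y\le N$ is attainable (that $Y$ does not escape to $+\infty$ as $\mathtt{\Theta}\to 0$) rather than using it; only in its final two lines does it perform the multiplication step you make explicit, and it never writes down a concrete $\alpha$. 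For the theorem as literally stated (a conditional with the bound on $Y$ as hypothesis), your proof is complete and cleaner, and it buys an explicit class $\mathcal{K}$ certificate; what it does not provide, and the paper's detour attempts to, is evidence that the hypothesis itself holds for the closed-loop system. Your two cautionary remarks --- restricting the sign argument to $\mathtt{\bar{C}}_{\mathtt{\Theta}}$ where $\mathtt{\Theta}(\mathtt{\Theta}+1)\ge 0$, and observing that the upper bound on $Y$ forces the numerator to be nonpositive at $\mathtt{\Theta}=0$ so the inequality persists on the boundary --- coincide with assumptions the paper also makes implicitly, so no gap is introduced.
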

\begin{proof}
	See Appendix B.
\end{proof}

\section{Application of Proposed Control Strategy to Adaptive Cruise Control}\label{section4}
The core of this section involves implementing the proposed approach to address the challenges of the adaptive cruise control (ACC) \cite{Payman2012,Ioannou1993}. \begin{figure}[!htbp]
	\centering
	\includegraphics[width=2.8in,angle=0]{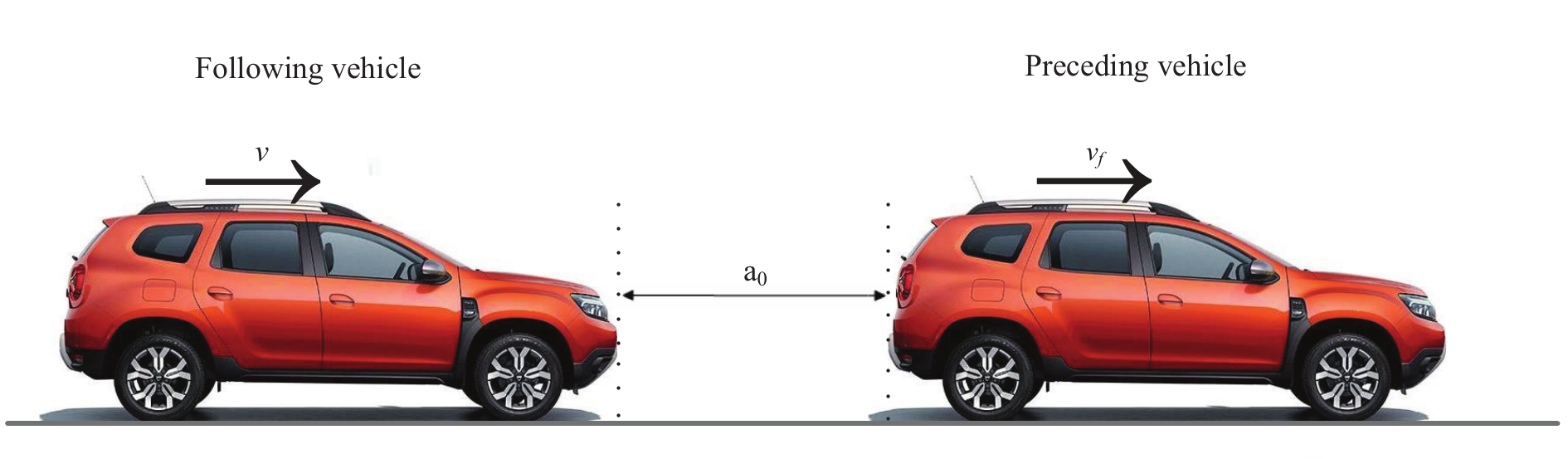}
	\captionsetup{font={small}}
	\caption{Outline of an ACC system.}
	\label{fig6}
\end{figure} \par
The value $a_0$ in Fig. \ref{fig6} represents the safe distance that vehicles should maintain to avoid collisions. The dynamics of the ACC system are described in \cite{Xiao2022}.
\begin{equation}\label{25}
\underbrace{\left[\begin{array}{l}
	\dot{\mathtt{v}}(t) \\
	\dot{\mathtt{z}}(t)
	\end{array}\right]}_{\dot{\boldsymbol{\zeta}}(t)}=\underbrace{\left[\begin{array}{c}
	-\frac{1}{\mathtt{M}} \mathfrak{F}_{\mathfrak{r}}(\mathtt{v}(t)) \\
	\mathtt{v}_{\mathtt{f}}-\mathtt{v}(t)
	\end{array}\right]}_{\mathtt{\hat{F}}(\boldsymbol{\zeta}(t))}+\underbrace{\left[\begin{array}{c}
	\frac{1}{\mathtt{M}} \\
	\mathtt{0}
	\end{array}\right]}_{\mathtt{\hat{G}}(\boldsymbol{\zeta}(t))} \mathtt{u}(t),
\end{equation} 
In the given context, ${\mathtt{M}}$ denotes the mass of the regulated automobile in kilograms (\si{kg}). Additionally, ${\mathtt{z}(t)}$ corresponds to the inter-vehicular gap in meters (\si{m}), while $\mathtt{v}_{\mathtt{f}} > 0$ and $\mathtt{v}(t) \geq0$ represent the speed of the leading car and the subsequent (controlled) vehicle, respectively, measured in meters per second (\si{m/s}). The input signal is denoted as ${\mathtt{u}(t)}$ in newtons (\si{N}). Furthermore, ${\mathfrak{F}_{\mathfrak{r}}(\mathtt{v}(t))}$, which can be estimated using the formula presented in \cite{Khalil2002}, represents the force of resistance arising from air drag or friction, expressed in newtons (\si{N}).
\begin{equation}\label{26}
{\mathfrak{F}_{\mathfrak{r}}(\mathtt{v}(t))=\mathfrak{r}_{\mathtt{0}}sgn(\mathtt{v}(t))+\mathfrak{r}_{\mathtt{1}}\mathtt{v}(t)+\mathfrak{r}_{\mathtt{2}}\mathtt{v}^{2}(t)},
\end{equation}
In this context, the magnitude of the scalars $\mathfrak{r}_{\mathtt{0}},\mathfrak{r}_{\mathtt{1}}$ and $\mathfrak{r}_{\mathtt{2}}$, which are all greater than zero, is established through empirical means. The composite form of ${\mathfrak{F}_{\mathfrak{r}}(\mathtt{v}(t))}$ signifies three distinct components: namely, the initial element representing coulombic friction, the second term characterizing viscous drag, and the third term delineating air resistance, which is proportional to the square of velocity $\mathtt{v}(t)$. \par

\textbf{Requirement} (Input and state limitations): There are constraints on the acceleration, speed, and safe distance \cite{Xiao2019,Xiao2022}.
\begin{equation}\label{27}
\begin{array}{r}
-\mathtt{c}_{\mathtt{d}} \mathtt{M} g \leq \mathtt{u}(t) \leq \mathtt{c}_{\mathtt{a}} \mathtt{M} g, \forall t \in\left[t_{\mathtt{0}}, t_{\mathtt{z}}\right], \\
\mathtt{v}_{\mathtt{min} } \leq \mathtt{v}(t) \leq \mathtt{v}_{\mathtt{max} }, \forall t \in\left[t_{\mathtt{0}}, t_{\mathtt{z}}\right], \\
\mathtt{\theta}(\boldsymbol{\zeta}(t)) = \mathtt{z}(t)- a_0 \geq 0, \forall t \in\left[t_{\mathtt{0}}, t_{\mathtt{z}}\right].
\end{array}
\end{equation}
In this given context, the variables ${\mathtt{v}_{\mathtt{min} }}$ and ${\mathtt{v}_{\mathtt{max} }}$ signify the lower and upper bounds of acceptable velocities, respectively, with both quantities expressed in non-negative values. Additionally, the coefficients ${\mathtt{c}_{\mathtt{d}}}$ and ${\mathtt{c}_{\mathtt{a}}}$, both greater than zero, correspond to the braking and acceleration rates, respectively. The symbol ${g}$ refers to the universal gravitational constant, while $a_0$ denotes a positive scalar parameter. \par 

\textbf{Objective} (Target speed): The vehicle under control tries to reach a target speed ${\mathtt{v}_\mathtt{T}}$, that is, $\left|\mathtt{v}(t)-\mathtt{v}_\mathtt{T}\right| \leq \epsilon$. \par

For the ACC system described by \eqref{25}, it is imperative to formulate a control law that accomplishes the specified \textbf{objective} while adhering to the stipulated \textbf{requirements}. \par 
Considering the dynamics provided by \eqref{25}, specific control bounds are in place (\eqref{27}), meaning that any control signal ${\mathtt{u}(t)}$ must adhere to the following condition
\begin{equation}\label{31}
-\mathtt{c}_{\mathtt{d}} g \leq \frac{\mathtt{u}(t)}{\mathtt{M}} \leq \mathtt{c}_{\mathtt{a}} g.
\end{equation} \par 
Define ${\mathtt{\theta}_1(\boldsymbol{\zeta}(t)):=\mathtt{v}_{\mathtt{max}}-\mathtt{v}(t), \mathtt{\theta}_2(\boldsymbol{\zeta}(t)):=\mathtt{v}(t)-\mathtt{v}_{\mathtt{min}}}$, and select ${\alpha(\mathtt{\theta}_1)=\mathtt{\theta}_1, \alpha(\mathtt{\theta}_2)=\mathtt{\theta}_2}$ from \eqref{5}. Then, ${\mathtt{u}(t)}$ should meet the following condition
\begin{equation}\label{29}
\begin{array}{c}
\frac{\mathfrak{F}_{\mathfrak{r}}(\mathtt{v}(t))}{\mathtt{M}} -(\mathtt{v}(t)-\mathtt{v}_{\min}) \leq \frac{\mathtt{u}(t)}{\mathtt{M}} \leq \frac{\mathfrak{F}_{\mathfrak{r}}(\mathtt{v}(t))}{\mathtt{M}}  +(\mathtt{v}_{\max}-\mathtt{v}(t)).
\end{array}
\end{equation} \par 
Let ${\mathtt{\Theta}(\boldsymbol{\zeta}(t))=e^{\frac{ \mathtt{\theta}(\boldsymbol{\zeta}(t))}{\| \boldsymbol{\zeta} + \boldsymbol{\mathfrak{d}} \| + \mathtt{r}}-\mathtt{\Delta}} -1}$, where ${\Vert \boldsymbol{\zeta} + \boldsymbol{\mathfrak{d}} \Vert=\sqrt{(\mathtt{v}(t)+d_{1})^2+(\mathtt{z}(t) + d_{2})^2}}, \mathtt{\Delta}>0$, and 
$$
\begin{aligned}
& L_\mathtt{\hat{G}} \mathtt{\Theta}(\boldsymbol{\zeta}(t)) = -\frac{(\mathtt{\Theta}(\boldsymbol{\zeta})+1) \mathtt{\theta}(\boldsymbol{\zeta})(\mathtt{v}(t) + d_{1})}{(\|\boldsymbol{\zeta}+\boldsymbol{\mathfrak{d}}\|+\mathtt{r})^2\|\boldsymbol{\zeta}+\boldsymbol{\mathfrak{d}}\|\mathtt{M}} < 0, \forall t \geq t_\mathtt{0}, \\
& G_d(\boldsymbol{\zeta}) = \left(\boldsymbol{\zeta}+\boldsymbol{\mathfrak{d}}\right)^T \mathtt{\hat{G}}(\boldsymbol{\zeta}) = \frac{\mathtt{v}(t) + d_{1}}{\mathtt{M}} > 0, \forall t \geq t_\mathtt{0}.
\end{aligned}
$$ \par  
We choose ${\alpha(\mathtt{\Theta})=K\mathtt{\Theta}}$ according to \eqref{10}, where $K$ is a positive constant. Substituting this into \eqref{44}, we have
\begin{equation}\label{30}
L_{\mathtt{\hat{F}}} \mathtt{\Theta}(\boldsymbol{\zeta}(t))+L_{\mathtt{\hat{G}}} \mathtt{\Theta}(\boldsymbol{\zeta}(t)) \mathtt{u}(t)+K \mathtt{\Theta}(\boldsymbol{\zeta}(t)) \geq 0.
\end{equation}

Considering a Lyapunov function ${\mathtt{V}_{\!\mathtt{ACC}}(\boldsymbol{\zeta}(t))\!=\!(\mathtt{v}(t)\!-\!\mathtt{v}_\mathtt{T})^2}$, where ${\mathtt{\chi}_1=\mathtt{\chi}_2=1}$ and ${\mathtt{\chi}_3>0}$ in Definition. \ref{definition4} and \eqref{46}, we obtain
\begin{equation}\label{28}
\begin{array}{l}
L_{\mathtt{\hat{F}}} \mathtt{V}_{\mathtt{ACC}}(\boldsymbol{\zeta}(t))+L_{\mathtt{\hat{G}}} \mathtt{V}_{\mathtt{ACC}}(\boldsymbol{\zeta}(t)) \mathtt{u}(t)+ \\ \mathtt{\chi}_3 \mathtt{V}_{\mathtt{ACC}}(\boldsymbol{\zeta}(t)) 
\leq \delta_{\mathtt{ACC}}(\mathtt{v}), \forall t \in[t_\mathtt{0},t_\mathtt{z}],
\end{array}
\end{equation} 
where ${\delta_{\mathtt{ACC}}(\mathtt{v})}$ is a slack variable, rendering \eqref{28} into a permissive CLF. \par 

Accordingly, we define the optimal control strategy for the ACC system through the following optimization problem
\begin{equation}\label{33}
\begin{array}{c}
\underset{\mathtt{u}(t), \delta_{\mathtt{ACC}}(\mathtt{v})}{\min} \int_{t_{\mathtt{0}}}^{t_{\mathtt{z}}}\left(\frac{\mathtt{u}(t)-\mathfrak{F}_{\mathfrak{r}}(\mathtt{v}(t))}{\mathtt{M}}\right)^{2}+\mathtt{p} \delta^{2}_{\mathtt{ACC}}(\mathtt{v}) d t, \\
\text { s.t. }\eqref{31},\eqref{29},\eqref{30},\eqref{28}.
\end{array}
\end{equation} \par 
We discretize time, assuming that the velocity $\mathtt{v}(t)$ remains constant within each adjacent small time interval $\left[t_{\mathtt{\kappa}}, t_{\mathtt{\kappa}+1}\right), \mathtt{\kappa}=0,1,2, \ldots, t_{\mathtt{0}}=0$. Since the selection of the optimal control law $\boldsymbol{\mathfrak{\upsilon}}^{*}(t)$ corresponds to each individual time interval, the term $\frac{\mathfrak{F}^{2}_{\mathfrak{r}}(\mathtt{v}(t))}{\mathtt{M}^{2}}$ within the integrand of \eqref{33} is a constant and need not be considered. Consequently, \eqref{33} may be simplified in the $\left[t_{\mathtt{\kappa}}, t_{\mathtt{\kappa}+1}\right)$ interval as presented in literature
\begin{equation}\label{35}
\begin{array}{l}
\left(\frac{\mathtt{u}(t)-\mathfrak{F}_{\mathfrak{r}}(\mathtt{v}(t))}{\mathtt{M}}\right)^{2}+\mathtt{p} \delta^{2}_{\mathtt{ACC}}(\mathtt{v}) \rightarrow \\
{\left[\begin{array}{ll}
	\!\!\! \mathtt{u}(t) & \delta_{\mathtt{ACC}}(\mathtt{v})
	\!\!\! \end{array}\right] \!\! \left[\begin{array}{cc}
	\!\!\! \frac{1}{\mathtt{M}^{2}} & \!\!\! 0 \\
	\!\!\! 0 & \!\!\!\!\! \mathtt{p}
	\!\!\! \end{array} \!\! \right]\!\!\! \left[\begin{array}{l}
	\mathtt{u}(t) \\
	\!\!\! \delta_{\mathtt{ACC}}(\mathtt{v})
	\end{array} \!\!\! \right] \!\! + \!\! \left[\begin{array}{cc}
	\!\!\! \frac{-2 \mathfrak{F}_{\mathfrak{r}}(\mathtt{v}(t))}{\mathtt{M}^{2}} & 0
	\end{array} \!\!\! \right] \!\!\! \left[\begin{array}{c}
	\!\! \mathtt{u}(t) \\
	\!\! \delta_{\mathtt{ACC}}(\mathtt{v})
	\end{array} \!\!\! \right]}.
\end{array}
\end{equation} \par 
Therefore, the model of the optimization problem \eqref{33} for each time interval can be rewritten as
\begin{equation}\label{36}
\begin{gathered}
\boldsymbol{\mathfrak{\upsilon}}^{*}(t)=\arg \min _{\boldsymbol{\mathfrak{\upsilon}}(t)} \frac{1}{2} \boldsymbol{\mathfrak{\upsilon}}(t)^{T} \boldsymbol{P} \boldsymbol{\mathfrak{\upsilon}}(t)+\boldsymbol{G}^{T} \boldsymbol{\mathfrak{\upsilon}}(t) \\
\text { s.t. } \boldsymbol{A}_{i}^{T}  \boldsymbol{\mathfrak{\upsilon}}(t) \leq \mathtt{\theta}_{i},
\end{gathered}
\end{equation}
\noindent where
\begin{align}
\boldsymbol{\mathfrak{\upsilon}}(t)\!\!=\!\! \left[ \! \! \begin{array}{c}
\!\mathtt{u}(t) \\ \notag
\!\delta_{\mathtt{ACC}}(\mathtt{v})
\end{array}\! \!\! \right]\!, \!\boldsymbol{P} \!=\!\! \left[\! \! \begin{array}{cc}
\!\frac{2}{\mathtt{M}^{2}} & \!0 \\ \notag
\!0 & \!2 \mathtt{p}
\end{array} \! \!\! \right]\!, \!\boldsymbol{G}\!=\!\! \left[\! \begin{array}{c}
\!\frac{-2 \mathfrak{F}_{\mathfrak{r}}(\mathtt{v}(t))}{\mathtt{M}^{2}} \\ \notag
\!0
\end{array}\!\! \right],
\end{align}

\noindent and $\boldsymbol{A}_{i}, \mathtt{\theta}_{i}$ are the values acquired by converting inequalities \eqref{31}, \eqref{29}, \eqref{30}, and \eqref{28} into $\boldsymbol{A}_{i}\boldsymbol{\upsilon}\leq \mathtt{\theta}_{i}$. Fig. \ref{fig11} illustrates the general procedure of employing the proposed method for constraint-based optimal control.

\begin{figure}[!htbp]
	\centering
	\includegraphics[width=2.8in,angle=0]{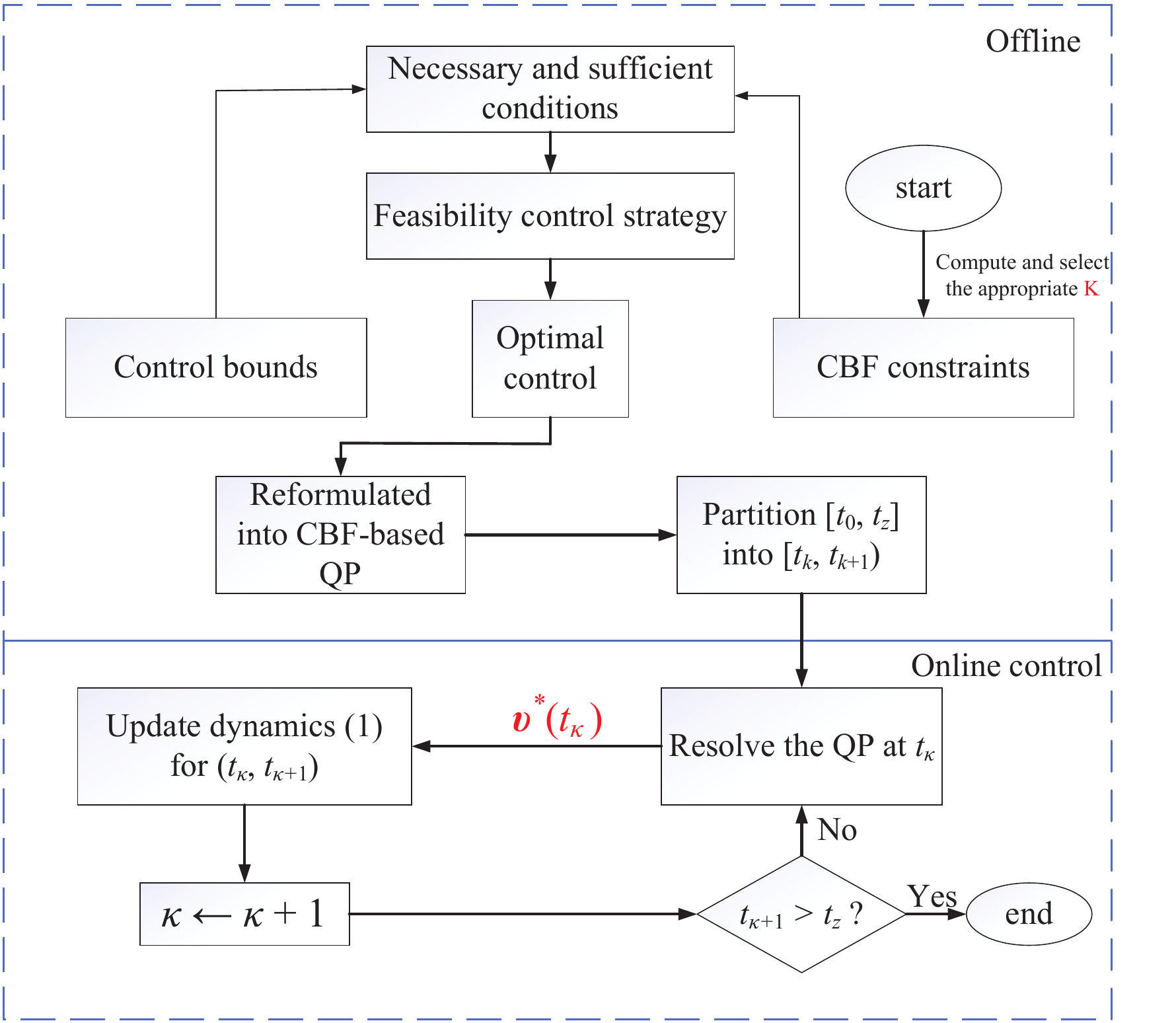}
	\captionsetup{font={small}}
	\caption{The overall process of using the feasibility control strategy.}
	\label{fig11}
\end{figure} \par

\section{Numerical Simulation}\label{section5}
In order to enhance the applicability to real-world scenarios, multiple initial positions were considered in the simulations of an adaptive cruise control system. The quadratic program problem \eqref{36} was computed subject to the constraint equations
$
A_{\mathrm{clf}} \boldsymbol{\mathfrak{\upsilon}}(t) \leq \theta_{\mathrm{clf}},
A_{\text {ncbf }} \boldsymbol{\mathfrak{\upsilon}}(t) \leq \mathtt{\theta}_{\text {ncbf }},
A_{\mathrm{v}_{\max }} \boldsymbol{\mathfrak{\upsilon}}(t) \leq \mathtt{\theta}_{\mathrm{v}_{\max }},
A_{\mathrm{v}_{\min }} \boldsymbol{\mathfrak{\upsilon}}(t) \leq \mathtt{\theta}_{\mathrm{v}_{\min }},
A_{\mathrm{limit}} \boldsymbol{\mathfrak{\upsilon}}(t) \leq \mathtt{\theta}_{\mathrm{limit}}.
$ \par 

We supposed that the speed of the preceding vehicle was ${\mathtt{v}_\mathtt{f}=13.89\si{m/s}}$ (\cite{Xiao2022,TaylorAmes2020}) and the rear car had different initial speeds $\mathtt{v}(0)$. The initial distance $\mathtt{z}(0)$ was set as $100 \si{m}$, and the safe distance was $a_0 = 10 \si{m}$. The remaining parameters were as follows: $\mathtt{v}_{\mathtt{max}} = 55\si{m/s}$, $\mathtt{v}_{\mathtt{min}} = 0\si{m/s}, \mathtt{v}_\mathtt{T} = 24\si{m/s}$, $g = 9.81\si{m/s^{2}}, \mathtt{M}=1650\si{kg}, \mathfrak{r}_{\mathtt{0}}=0.1\si{N}, \mathfrak{r}_{\mathtt{1}}=5\si{Ns/m}, \mathfrak{r}_{\mathtt{2}}=0.25\si{Ns^2/m}, \mathtt{\Delta} t=0.1\si{s}, \mathtt{p}=1$, $\mathtt{c}_\mathtt{d} = \mathtt{c}_\mathtt{a} = 0.4, \mathtt{\chi}_{1}=1, \mathtt{\chi}_{2}=1, \mathtt{\chi}_{3} =10, \mathtt{\Delta}=0.09, K=0.2, \mathtt{r} = 0.01, \boldsymbol{\mathfrak{d}} = \left(0.1, 0.1\right)$. Figs. \ref{fig4}-\ref{fig7} display the simulation results. \par  
\begin{figure}[!htbp]
\centering
\subfigure[]{
	\label{fig4:a}
	\includegraphics[width=2.8in,angle=0]{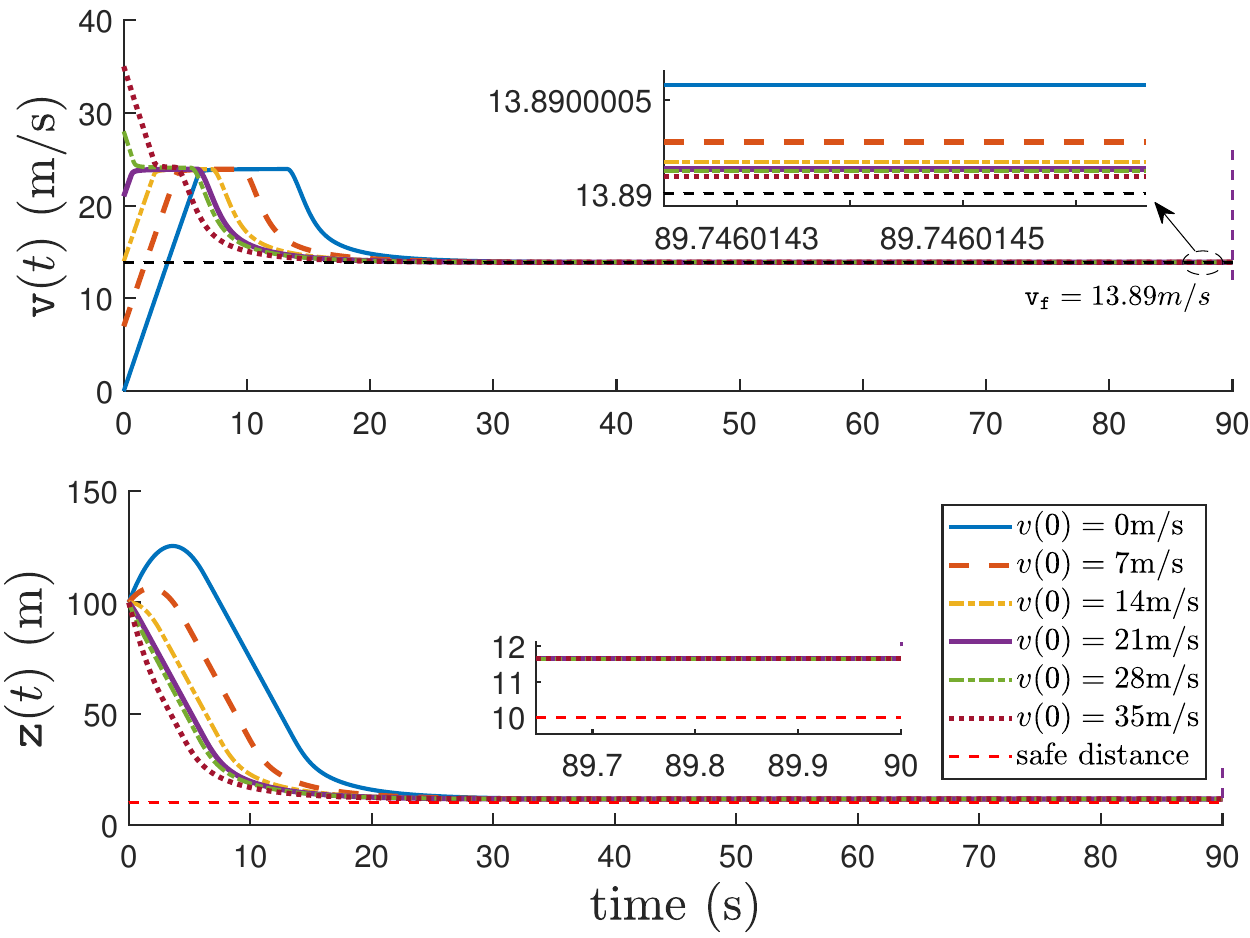}}
\subfigure[]{
	\label{fig4:b}
	\includegraphics[width=2.8in,angle=0]{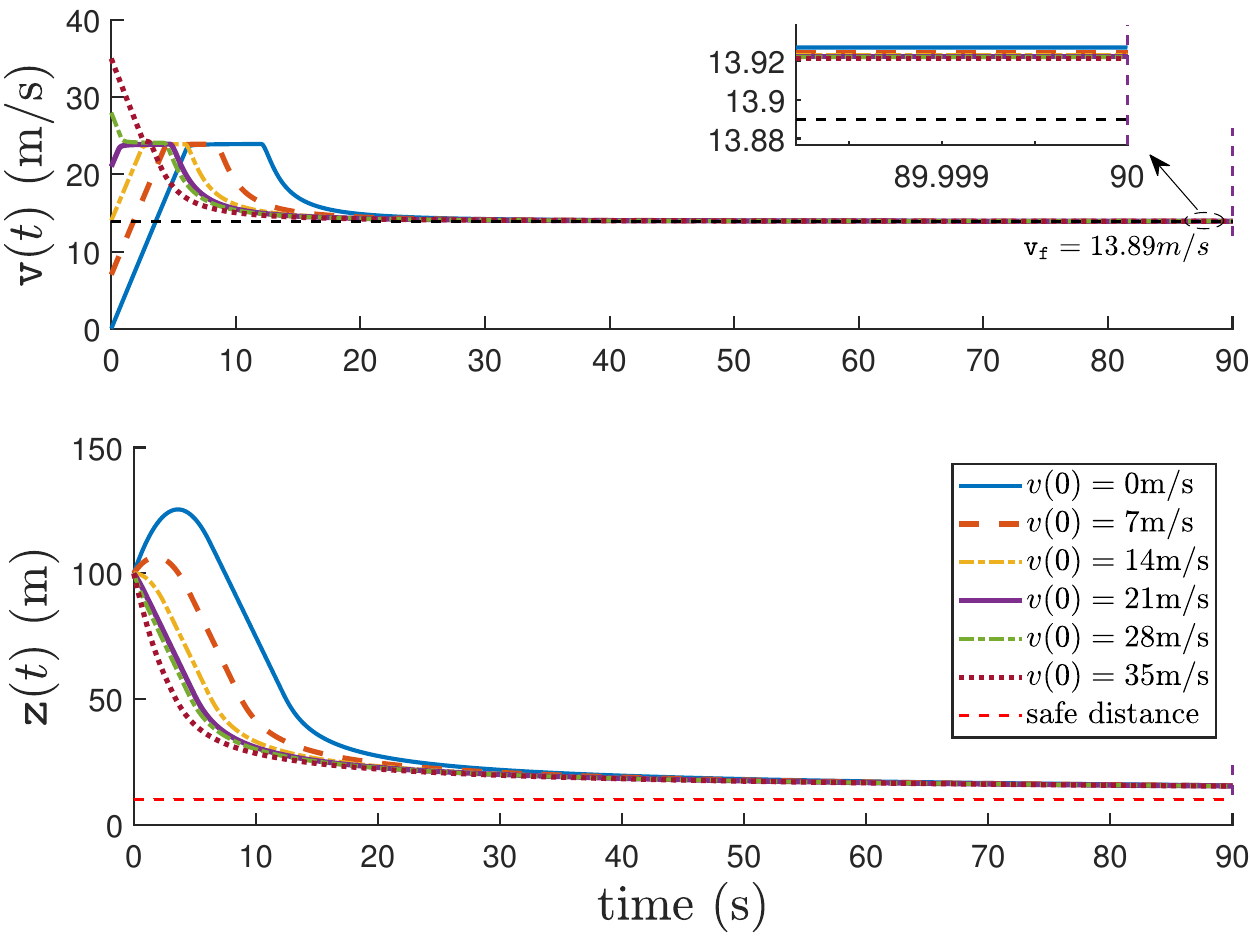}}
\captionsetup{font={small}}
\caption{ (a) NCBF: The ACC system states for different velocities. (b) HOCBF: The ACC system states for different velocities.}
\label{fig4}
\end{figure} \par

\begin{figure}[!htbp]
	\centering
	\subfigure[]{
		\label{fig1:a}
		\includegraphics[width=2.8in,angle=0]{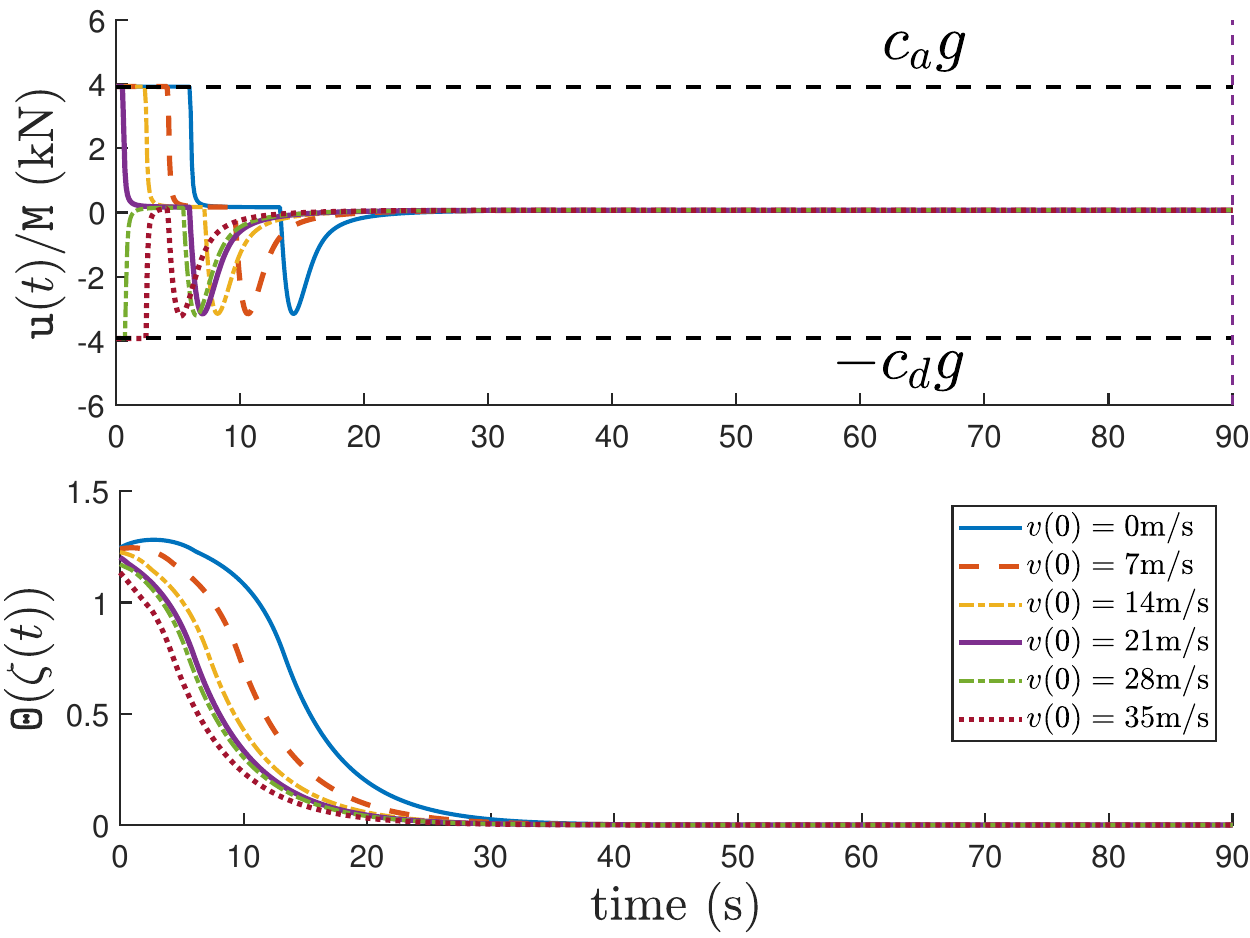}}
	\subfigure[]{
		\label{fig1:b}
		\includegraphics[width=2.8in,angle=0]{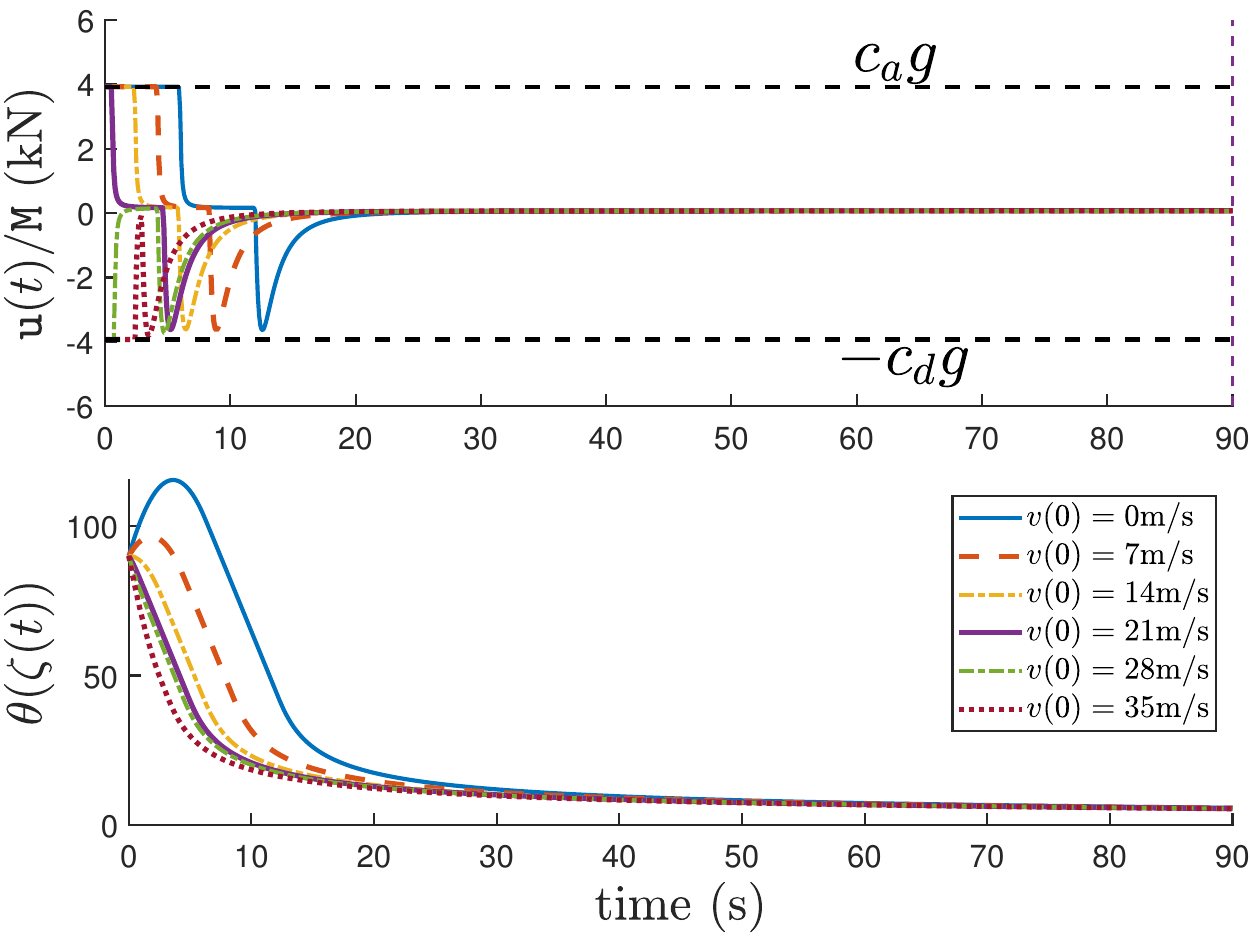}}
	\captionsetup{font={small}}
	\caption{ (a) NCBF: The control input and barrier function for different initial velocities. (b) HOCBF: The control input and barrier function for different initial velocities.}
	\label{fig1}
\end{figure} \par

\begin{figure}[!htbp]
	\centering
	\subfigure[]{
		\label{fig2:a}
		\includegraphics[width=3in,angle=0]{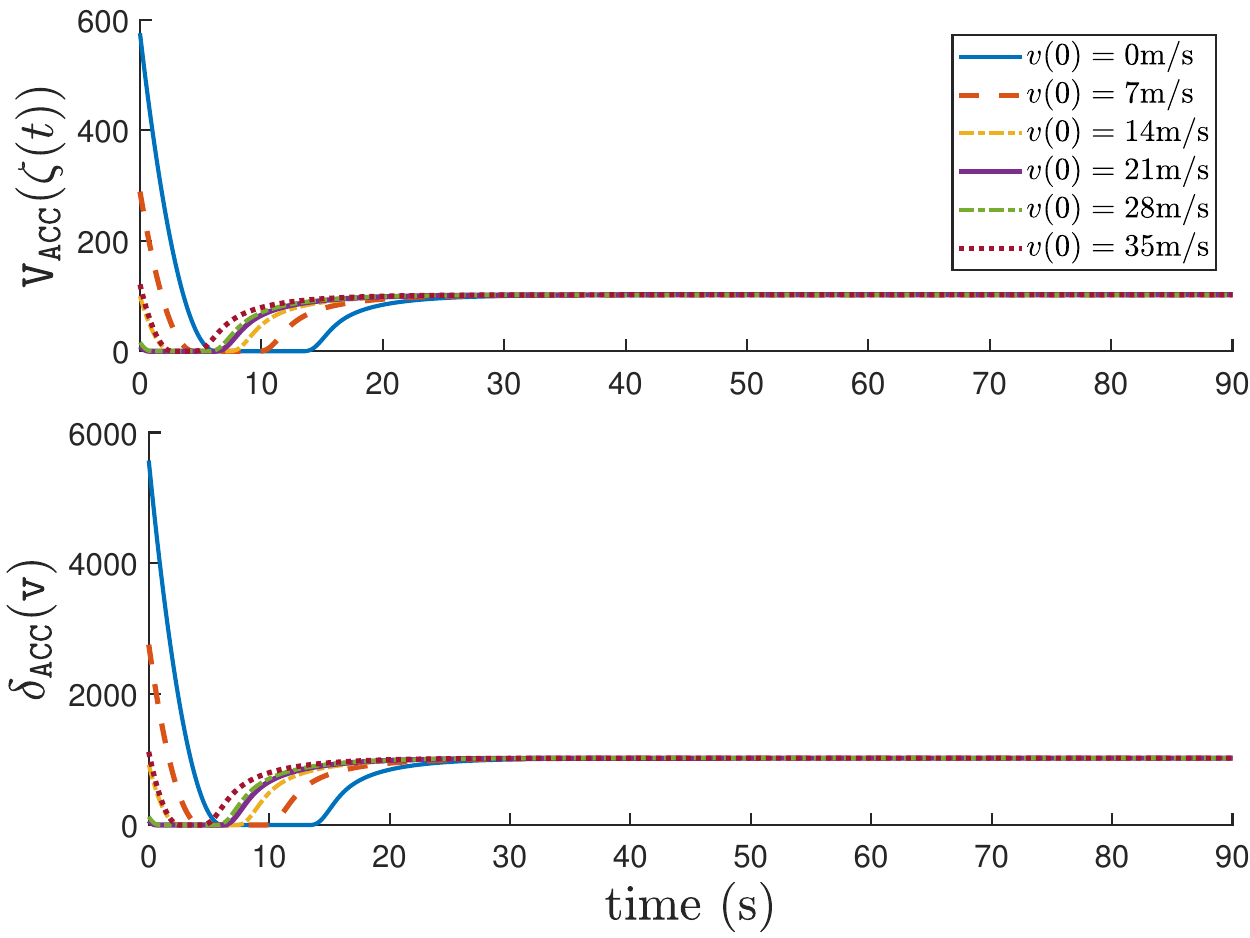}}
	\subfigure[]{
		\label{fig2:b}
		\includegraphics[width=3in,angle=0]{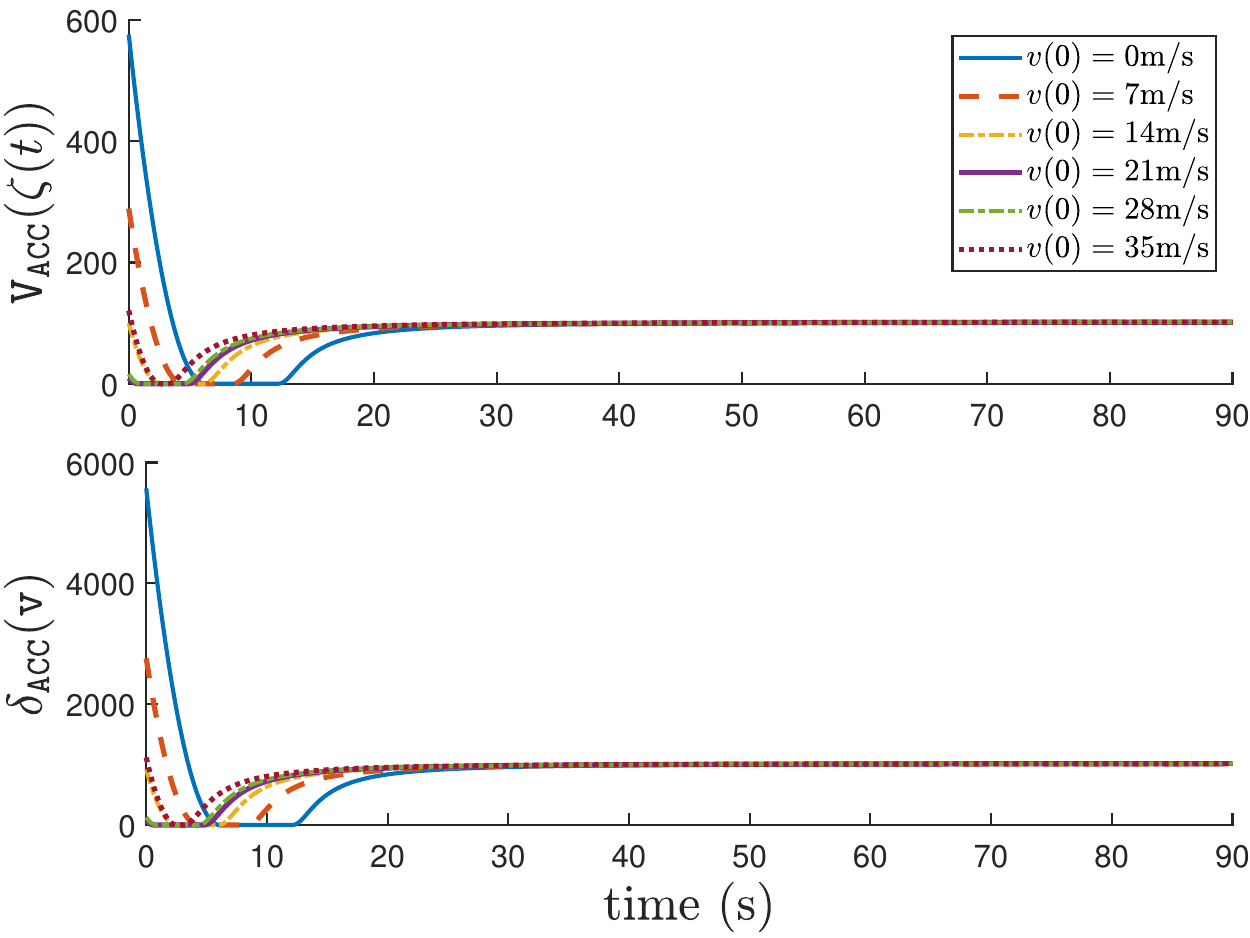}}
	\captionsetup{font={small}}
	\caption{ (a) NCBF: $\mathtt{V}_{\mathtt{ACC}}(\boldsymbol{x}(t)), \delta_{\mathtt{A C C}}(\mathtt{v}(t))$ for different initial velocities. (b) HOCBF: $\mathtt{V}_{\mathtt{ACC}}(\boldsymbol{x}(t)), \delta_{\mathtt{A C C}}(\mathtt{v}(t))$ for different initial velocities.}
	\label{fig2}
\end{figure} \par

Fig. \ref{fig4} illustrates the accelerated convergence of the ACC system to a safe and stable state achieved by the designed NCBF, resulting in faster convergence. Regardless of the chosen control barrier functions or initial velocity, as demonstrated in Figs. \ref{fig4:a} and \ref{fig4:b}, the system consistently prioritizes reaching the target velocity. However, since the expected velocity alone does not guarantee safety, the system applies the optimal solution $\boldsymbol{\upsilon}^{*}(t)$ obtained by solving the quadratic programming problem with inequality constraints to the controlled vehicle. This aligns its velocity with the preceding vehicle, ensuring safety and stability in the control process.  \par 

In Fig. \ref{fig1:a} and \ref{fig1:b}, both the NCBF and HOCBF methods ensure the feasibility of quadratic programming. Furthermore, as observed in Fig. \ref{fig1} and \ref{fig2}, starting from a certain point, the value of $\mathtt{\Theta}(\boldsymbol{\zeta}(t))$ continuously decreases, eventually approaching zero as its derivative tends to zero. By appropriately selecting the parameter $\mathtt{p}_{\mathtt{ACC}}$, the inequality-constrained quadratic programming, which includes constraints derived from the control Lyapunov function and control barrier function, prioritizes minimizing the term $\mathtt{p}_{\mathtt{ACC}} \delta^{2}_{\mathtt{ACC}}(\mathtt{v})$. In other words, it ensures that the velocity, $\mathtt{v}(t)$, reaches the desired state first. After that point, $\mathtt{\Theta}(\boldsymbol{\zeta}(t))$ continues to decrease monotonically. Additionally, Figs. \ref{fig1} and \ref{fig2} demonstrate the effectiveness of the CLF-CBF-QP method in addressing multitarget tasks. \par 

After analyzing Figure \ref{fig3}, it becomes clear that both NCBF and HOCBF guarantee the system's safety. However, the optimal solution achieved with NCBF demonstrates a relatively smoother rate of change. Consequently, the distance between the two vehicles is smaller once the ACC system reaches a safe and stable state compared to the distance obtained with HOCBF. In set $\mathtt{\bar{C}}$, this is expressed as follows: NCBF brings the system's state closer to the safety boundary of the set $\partial \mathtt{\bar{C}} =\{\boldsymbol{\zeta} \in \mathbb{R}^{\mathtt{n}}: \theta(\boldsymbol{\zeta})=0 \}$. While HOCBF also ensures safety, it results in a more cautious safety distance, representing a more conservative set. This suggests that NCBF can be considered the 'optimal' CBF, whereas HOCBF is relatively more conservative \cite{AmesXu2017}. \par 

Fig. \ref{fig7} clearly demonstrates the effectiveness of the designed inequality-constrained optimization control \eqref{33} and quadratic programming \eqref{36} in meeting the requirements. Through appropriate parameter tuning, the CLF-CBF-QP approach ensures the safe and stable control of the ACC system, as depicted in the figure. The control Lyapunov function guides the controlled vehicle's speed, $\mathtt{v}(t)$, to reach the desired velocity initially, resulting in a continuous decrease in the distance,  $\mathtt{z}(t)$, between the two vehicles. As safety is strictly defined as a hard constraint, the control barrier function compels the speed of the controlled vehicle to synchronize with that of the preceding vehicle, thereby ensuring the overall safety of the system.

\begin{figure}[!htbp]
	\centering
	\subfigure[]{
		\label{fig3:a}
		\includegraphics[width=3in,angle=0]{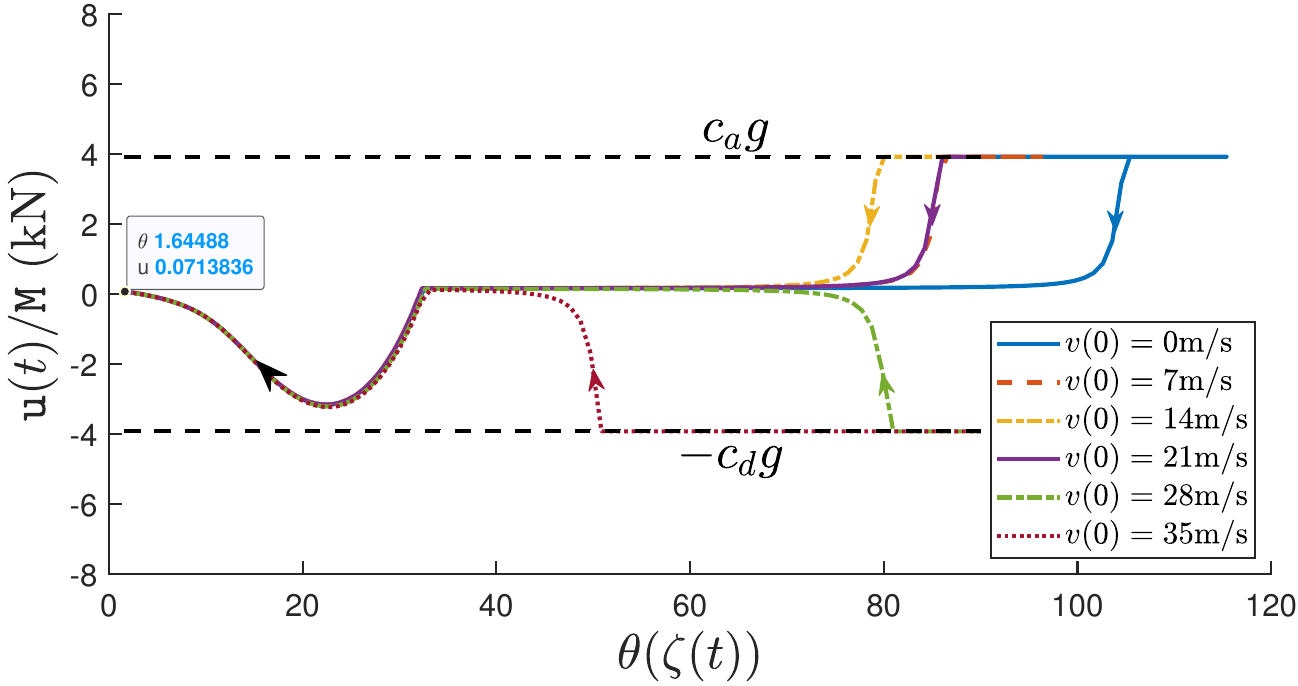}}
	\subfigure[]{
		\label{fig3:b}
		\includegraphics[width=3in,angle=0]{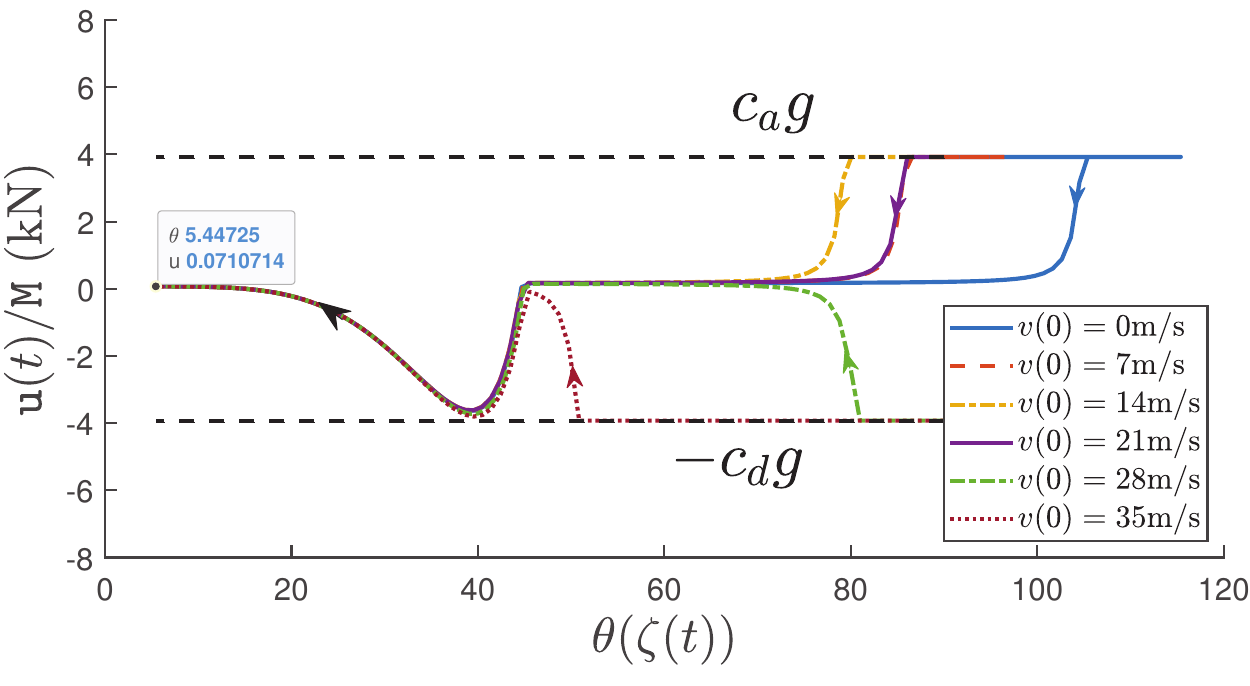}}
	\captionsetup{font={small}}
	\caption{ (a) NCBF: The temporal evolution of the relationship between $\mathtt{u}(t)$ and $\mathtt{\theta}(\boldsymbol{\zeta}(t))$ for different initial velocities is depicted in the evolutionary diagram. The arrows illustrate the directional trend of $\mathtt{\theta}(\boldsymbol{\zeta}(t))$ over time. (b) HOCBF: The temporal evolution of the relationship between $\mathtt{u}(t)$ and $\mathtt{\theta}(\boldsymbol{\zeta}(t))$ for different initial velocities is depicted in the evolutionary diagram. The arrows illustrate the directional trend of $\mathtt{\theta}(\boldsymbol{\zeta}(t))$ over time.}
	\label{fig3}
\end{figure} \par

\begin{figure}[!htbp]
	\centering
	\subfigure[]{
		\label{fig7:a}
		\includegraphics[width=3.2in,angle=0]{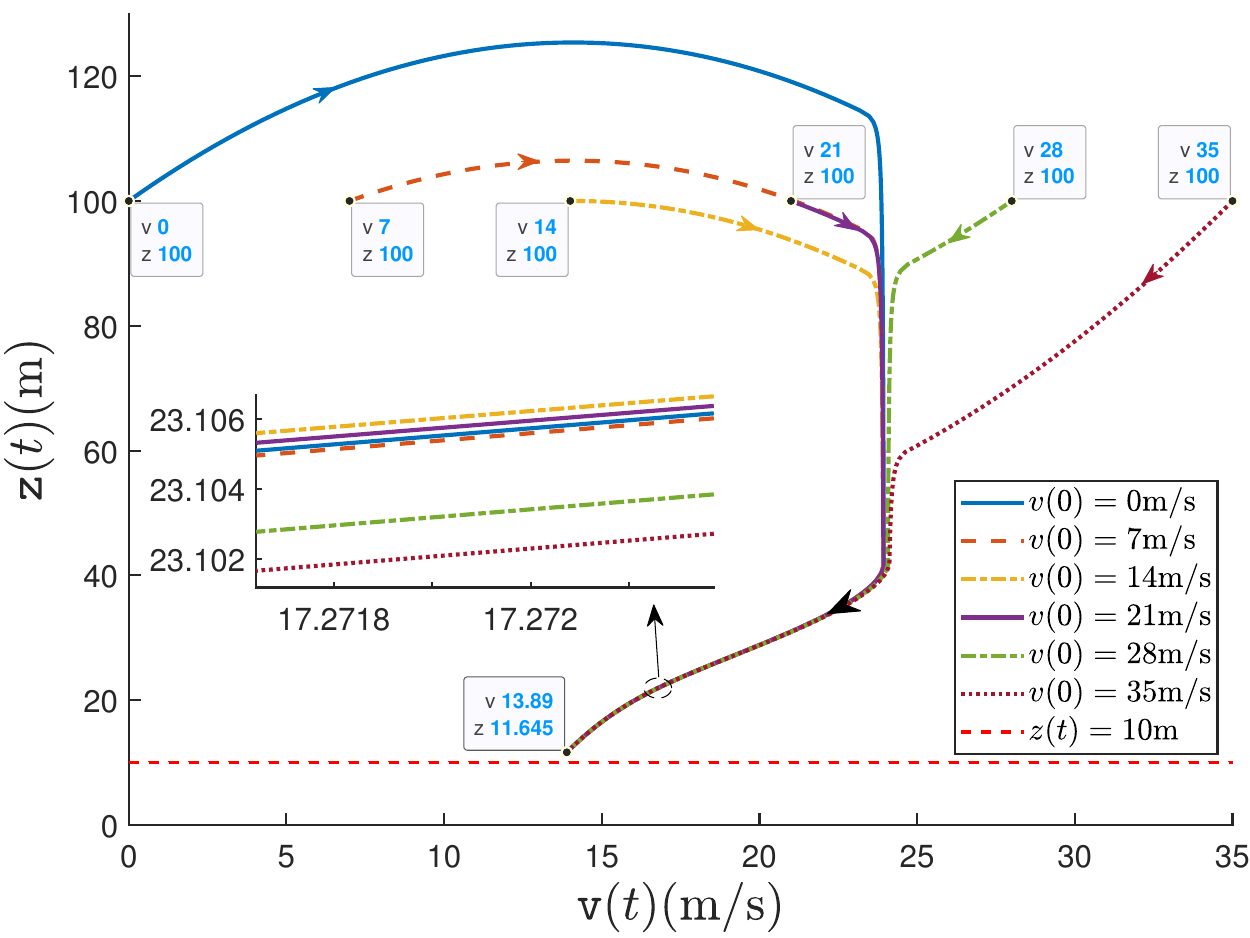}}
	\subfigure[]{
		\label{fig7:b}
		\includegraphics[width=3.2in,angle=0]{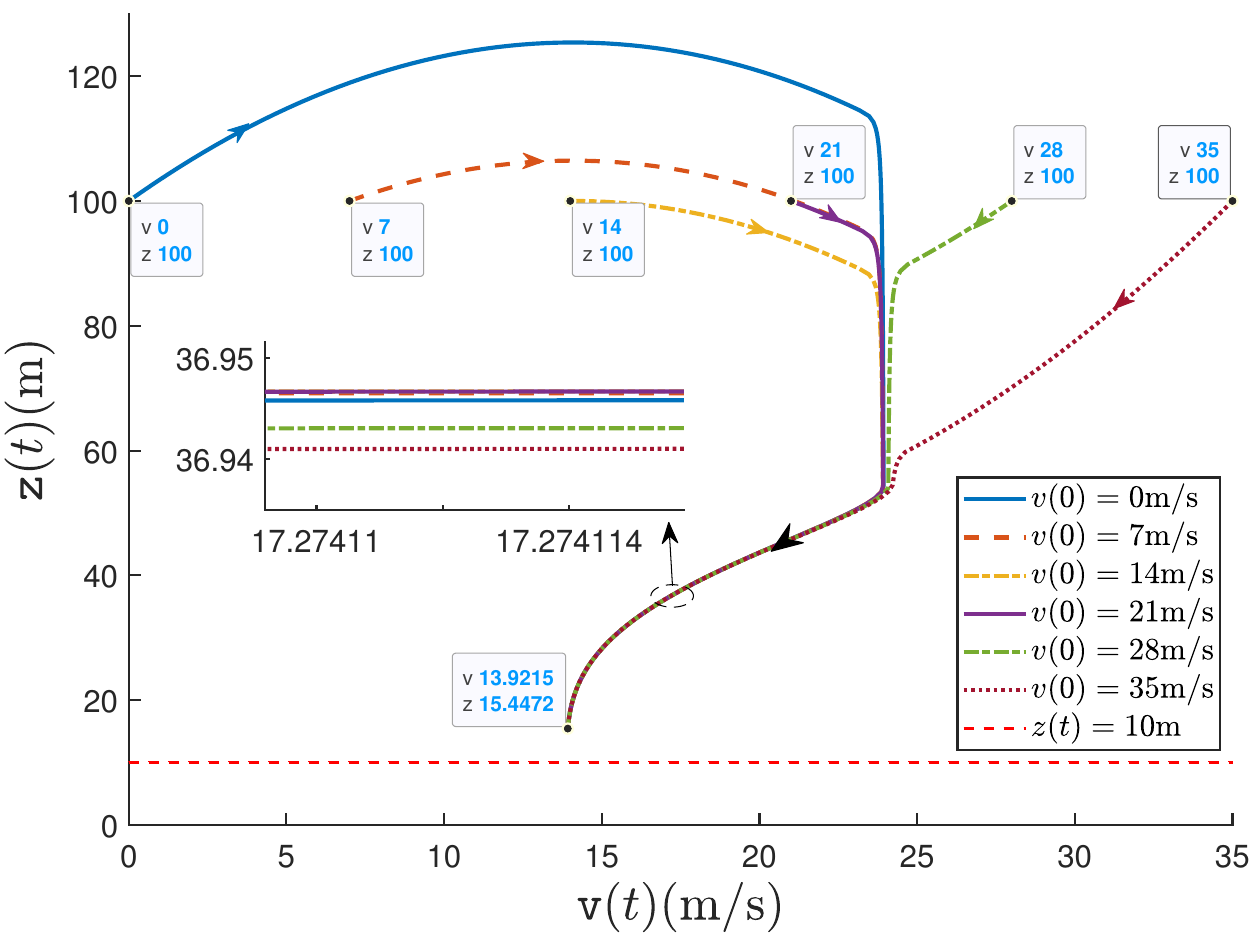}}
	\captionsetup{font={small}}
	\caption{ (a) NCBF: The correlation between $\mathtt{z}(t)$ and $\mathtt{v}(t)$ is represented, where the arrows indicate the directional trend of $\mathtt{z}(t)$ over time. (b) HOCBF: The correlation between $\mathtt{z}(t)$ and $\mathtt{v}(t)$ is represented, where the arrows indicate the directional trend of $\mathtt{z}(t)$ over time.}
	\label{fig7}
\end{figure} \par

\section{Conclusion}\label{section6}
In this study, we have introduced a paradigmatic nonlinear control barrier function (NCBF) framework for affine nonlinear systems. By operating under the notable feature of possessing a relative degree of 1, the NCBF constraint works towards ensuring robust system safety while adhering to this requirement. The presented approach delineates a distinctive departure from extant techniques in that it is distinctly applicable to affine nonlinear systems and provides both indispensable and adequate conditions essential for the meticulous design of feasible control protocols. In demonstrating its veracity, we deployed the NCBF on an adaptive cruise control system, whereby it proffered quantitative prove of stability. Nonetheless, it warrants mentioning that the solution of the constrained quadratic program problem may entail considerable computational overhead. Future inquiries will address this predicament by formulating an efficient and computationally practical control strategy capable of addressing both control input constraints and the NCBF constraint. A reduction in computational burden will augur well with the pragmatic applicability of the control strategy whilst ensuring optimal system longevity and stability.

\section*{Appendix}

\subsection{}\label{A}

\textit{Proof of \textbf{Theorem \ref{theorem2}}:} 
	The viability of the control methodology \eqref{43} hinges on the modification of any symbol within the temporal discontinuity, within the construct of $G_d(\boldsymbol{\zeta})$. \par 
	\textbf{(1) No element in $G_d(\boldsymbol{\zeta})$ changes its symbol: }
	If $G_d(\boldsymbol{\zeta})\geq \boldsymbol{0}$ ($G_d(\boldsymbol{\zeta}) = \left|G_{d}(\boldsymbol{\zeta})\right|$), we multiply $G_d(\boldsymbol{\zeta})$ on the left side of \eqref{45}, we obtain
	\begin{equation}\label{13}
	G_d(\boldsymbol{\zeta}) \boldsymbol{\mathfrak{\upsilon}}_{\min } \leq G_d(\boldsymbol{\zeta}) \boldsymbol{\mathfrak{\upsilon}} \leq G_d(\boldsymbol{\zeta})
	\boldsymbol{\mathfrak{\upsilon}}_{\max }.
	\end{equation} \par 	
	The sufficient and necessary condition under which \eqref{10} and \eqref{13} do not conflict is
	\begin{equation}\label{15}
	\begin{aligned}
	\frac{(\mathtt{\Theta}+1) \mathtt{\theta}}{(\|\boldsymbol{\zeta}\!+\!\boldsymbol{\mathfrak{d}}\| \!+ \! \mathtt{r})^2\|\boldsymbol{\zeta} \! + \! \boldsymbol{\mathfrak{d}}\|}[G_d \boldsymbol{\mathfrak{\upsilon}}_{\min } \! + \!(\boldsymbol{\zeta} \! + \! \boldsymbol{\mathfrak{d}})^T \mathtt{\hat{F}}(\boldsymbol{\zeta})-\\
	\frac{(\|\boldsymbol{\zeta}+\boldsymbol{\mathfrak{d}}\|+\mathtt{r})\|\boldsymbol{\zeta}+\boldsymbol{\mathfrak{d}}\| L_\mathtt{\hat{F}} \mathtt{\theta}}{\mathtt{\theta}}] \leq \alpha(\mathtt{\Theta}).
	\end{aligned}
	\end{equation}
	Here, $\mathtt{\Theta}(\boldsymbol{\zeta}), \mathtt{\theta}(\boldsymbol{\zeta})$, and $G_d(\boldsymbol{\zeta})$ are abbreviated as $\mathtt{\Theta}$, $\mathtt{\theta}$, and $G_d$, respectively. \par 
	
	\begin{figure}[!htbp]
		\centering
		\includegraphics[width=3in,angle=0]{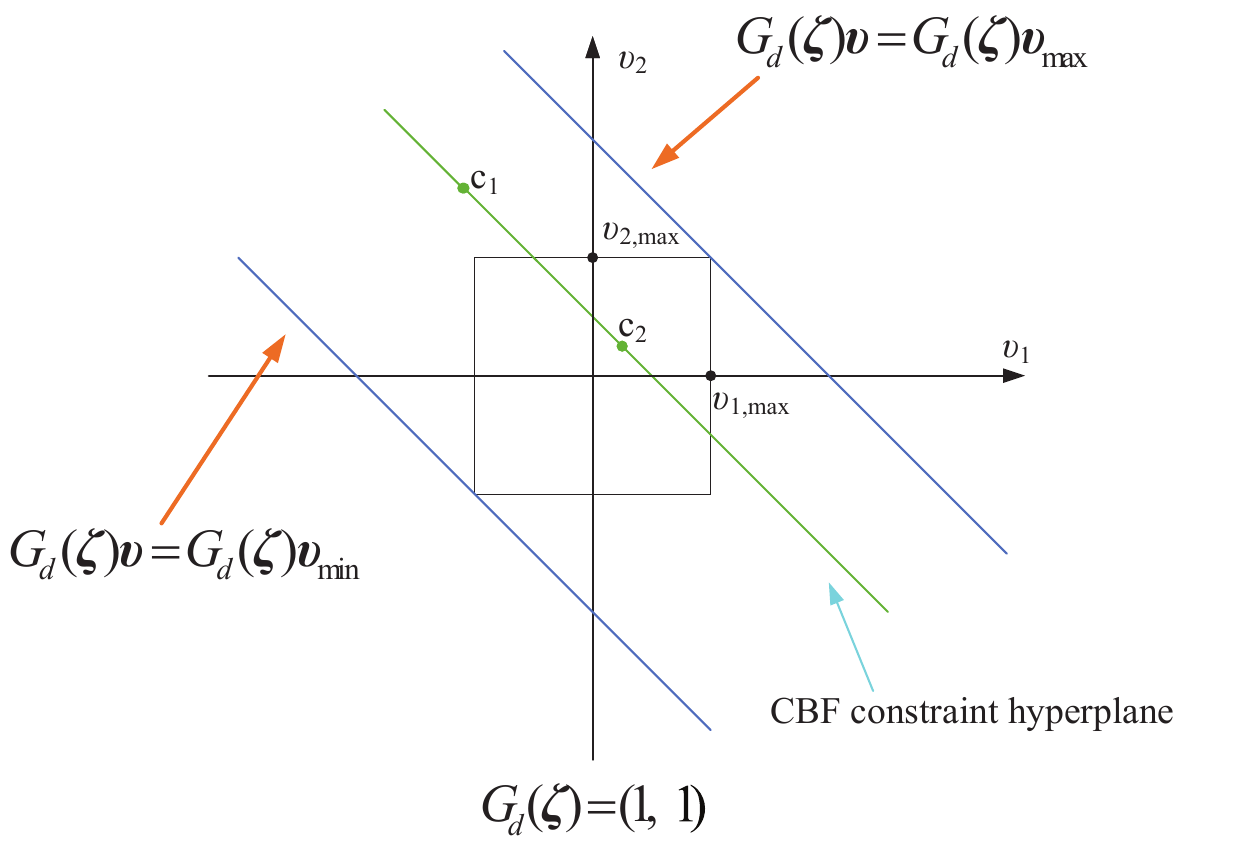}
		\captionsetup{font={small}}
		\caption{The correlation between the confinement of control input \eqref{45} and the constraint of Control Barrier Function (CBF) \eqref{44} in a scenario where a two-dimensional vector $\boldsymbol{\mathfrak{\upsilon}}=(\mathtt{\upsilon}_1, \mathtt{\upsilon}_2)^{T}$ is employed, under the condition that $G_{d}(\boldsymbol{\zeta}) \geq \boldsymbol{0}$. The inclination of the two blue lines or hyperplanes $G_d(\boldsymbol{\zeta}) \boldsymbol{\mathfrak{\upsilon}} = G_d(\boldsymbol{\zeta}) \boldsymbol{\mathfrak{\upsilon}}_{\min}$ and $G_d(\boldsymbol{\zeta}) \boldsymbol{\mathfrak{\upsilon}} = G_d(\boldsymbol{\zeta})\boldsymbol{\mathfrak{\upsilon}}_{\max}$ hinges on the magnitude of $G_{d}(\boldsymbol{\zeta})$. Meanwhile, the domain of the control input limitation \eqref{45} translates to the geometrically enclosed area of a rectangle inferred by the aforementioned hyperplanes.}
		\label{fig0}
	\end{figure} \par
	Based on the observations made in Fig. \ref{fig0}, we can make the inference that if \eqref{10} does not contradict with \eqref{13}, then it effectively becomes conflict-free from \eqref{45}, as represented by $c_1$ and $c_2$ in Fig. \ref{fig0}. Consequently, through this argument, we can discern that \eqref{15} is both an essential and adequate condition for establishing the compatibility between \eqref{45} and \eqref{44}.
	
	\textbf{(2) Symbol changes occur in some components of $G_d(\boldsymbol{\zeta})$: }Remember that $G_d(\boldsymbol{\zeta})=\left(G_{d1}(\boldsymbol{\zeta}), \cdots, G_{dq}(\boldsymbol{\zeta}) \right) \subset \mathbb{R}^{1\times \mathtt{q}}$. If $G_{d\mathfrak{i}}(\boldsymbol{\zeta}) \subset \mathbb{R}$ for $\mathfrak{i} \in \{1,\ldots,\mathtt{q}\}$ and there are symbol changes over the interval $[t_\mathtt{0}, t_\mathtt{z}]$, it is necessary to reconsider whether Theorem \ref{theorem2} holds for the following two cases. \par 
	Define $\boldsymbol{\mathfrak{\upsilon}} \!=\! (\mathtt{\upsilon}_{1},\cdots,\mathtt{\upsilon}_{\mathtt{q}})^{T}$, $\! \boldsymbol{\mathfrak{\upsilon}}_{\min} \!=\! (\mathtt{\upsilon}_{1,\min},\cdots,\mathtt{\upsilon}_{\mathtt{q},\min})^{T} \!\leq\! \boldsymbol{0}$, $\boldsymbol{\mathfrak{\upsilon}}_{\max} = (\mathtt{\upsilon}_{1,\max},\cdots,\mathtt{\upsilon}_{\mathtt{q},\max})^{T} \geq \boldsymbol{0}$. \par 
	\textbf{Case 1:} The control constraints for $\mathtt{\upsilon}_{\mathfrak{i}}, \mathfrak{i} \in \{1,\ldots,\mathtt{q}\}$ exhibit symmetry, that is, $\mathtt{\upsilon}_{\mathfrak{i},\max}=-\mathtt{\upsilon}_{\mathfrak{i},\min}$. Under the circumstance, if $G_{d\mathfrak{i}}(\boldsymbol{\zeta})>0$, we can conclude that
	\begin{equation}\label{19}
	G_{d\mathfrak{i}}(\boldsymbol{\zeta}) \mathtt{\upsilon}_{\mathfrak{i}, \min } \leq G_{d\mathfrak{i}}(\boldsymbol{\zeta}) \mathtt{\upsilon}_{\mathfrak{i}} \leq G_{d\mathfrak{i}}(\boldsymbol{\zeta}) \mathtt{\upsilon}_{\mathfrak{i}, \max }.
	\end{equation} \par
	If $G_{d\mathfrak{i}}(\boldsymbol{\zeta})$ changes the symbol at some time $t_{c} \in [t_{\mathtt{0}}, t_{\mathtt{z}}]$ and $G_{d\mathfrak{j}}(\boldsymbol{\zeta}) (\mathfrak{j} \neq \mathfrak{i})$ remains unchanged, then
	\begin{equation}\label{20}
	\begin{aligned}
	&G_{d \mathfrak{i}}(\boldsymbol{\zeta}) \mathtt{\upsilon}_{\mathfrak{i}, \max } \leq G_{d \mathfrak{i}}(\boldsymbol{\zeta}) \mathtt{\upsilon}_\mathfrak{i} \leq G_{d \mathfrak{i}}(\boldsymbol{\zeta}) \mathtt{\upsilon}_{\mathfrak{i}, \min }, \\
	&G_{d \mathfrak{j}}(\boldsymbol{\zeta}) \mathtt{\upsilon}_{\mathfrak{j}, \min } \leq G_{d \mathfrak{j}}(\boldsymbol{\zeta}) \mathtt{\upsilon}_j \leq G_{d \mathfrak{j}}(\boldsymbol{\zeta}) \mathtt{\upsilon}_{\mathfrak{j}, \max }.
	\end{aligned}
	\end{equation}
	According to \eqref{20},
	\begin{equation}\label{21}
	\left|G_{d}(\boldsymbol{\zeta})\right| \boldsymbol{\mathfrak{\upsilon}}_{\min } \leq G_{d}(\boldsymbol{\zeta}) \boldsymbol{\mathfrak{\upsilon}} \leq\left|G_{d}(\boldsymbol{\zeta})\right| \boldsymbol{\mathfrak{\upsilon}}_{\max }.
	\end{equation}
	
	The sufficient and necessary condition under which \eqref{10} does not conflict with \eqref{21} is
	\begin{equation}\label{11}
	\begin{aligned}
	\frac{(\mathtt{\Theta}+1) \mathtt{\theta}}{(\|\boldsymbol{\zeta}\!+\!\boldsymbol{\mathfrak{d}}\| \!+ \! \mathtt{r})^2\|\boldsymbol{\zeta} \! + \! \boldsymbol{\mathfrak{d}}\|}[\left|G_d\right| \boldsymbol{\mathfrak{\upsilon}}_{\min } \! + \!(\boldsymbol{\zeta} \! + \! \boldsymbol{\mathfrak{d}})^T \mathtt{\hat{F}}(\boldsymbol{\zeta})-\\
	\frac{(\|\boldsymbol{\zeta}+\boldsymbol{\mathfrak{d}}\|+\mathtt{r})\|\boldsymbol{\zeta}+\boldsymbol{\mathfrak{d}}\| L_\mathtt{\hat{F}} \mathtt{\theta}}{\mathtt{\theta}}] \leq \alpha(\mathtt{\Theta}).
	\end{aligned}
	\end{equation}
	
	\begin{figure}[!htbp]
		\centering
		\includegraphics[width=3in,angle=0]{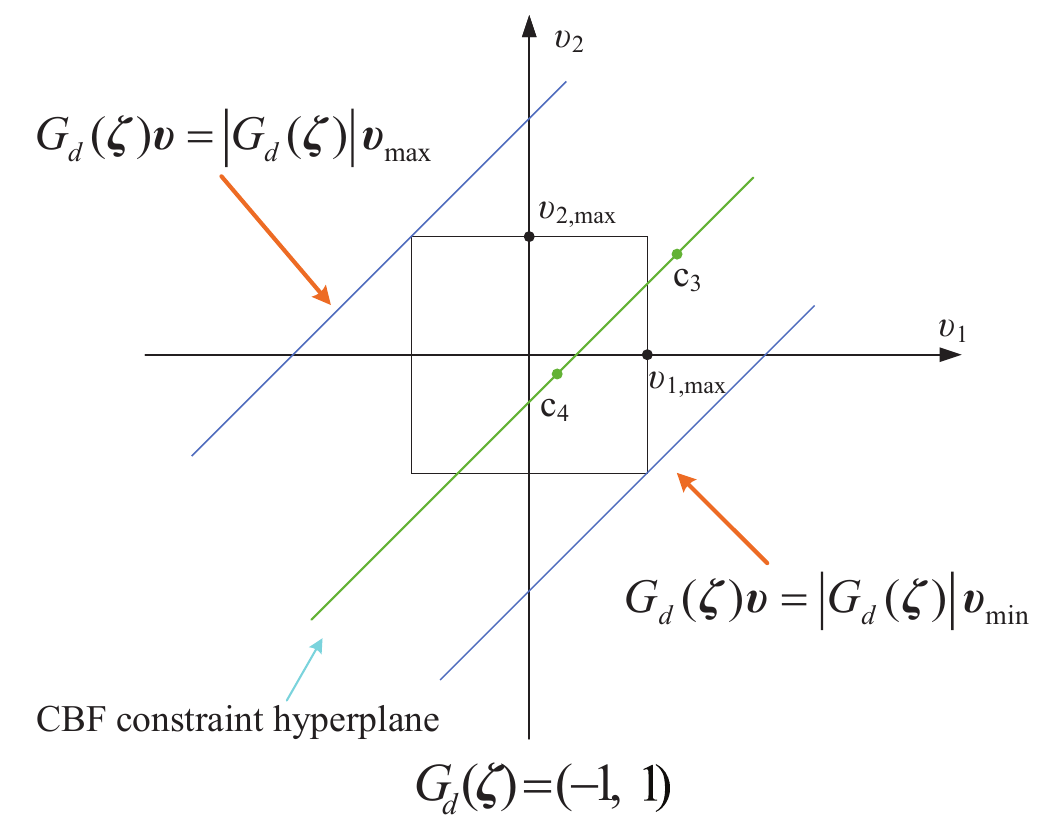}
		\captionsetup{font={small}}
		\caption{The correlation between the confinement of control input \eqref{45} and the constraint of Control Barrier Function (CBF) \eqref{44} in a scenario where a two-dimensional vector $\boldsymbol{\mathfrak{\upsilon}}=(\mathtt{\upsilon}_1, \mathtt{\upsilon}_2)^{T}$ is employed, and $G_{d\mathfrak{i}}(\boldsymbol{\zeta})$ modifies the symbol. The inclination of the two blue lines or hyperplanes $G_d(\boldsymbol{\zeta}) \boldsymbol{\mathfrak{\upsilon}} = \left|G_{d}(\boldsymbol{\zeta})\right| \boldsymbol{\mathfrak{\upsilon}}_{\min}$ and $G_d(\boldsymbol{\zeta}) \boldsymbol{\mathfrak{\upsilon}} = \left|G_{d}(\boldsymbol{\zeta})\right| \boldsymbol{\mathfrak{\upsilon}}_{\max}$ is sensitive to the magnitude of $G_{d}(\boldsymbol{\zeta})$.}
		\label{fig5}
	\end{figure} \par
	From Fig. \ref{fig0} and \ref{fig5}, \eqref{11} is the sufficient and necessary condition no matter how the symbol of $G_{d}(\boldsymbol{\zeta})$ changes in $[t_\mathtt{0}, t_\mathtt{z}]$.
	
	\textbf{Case 2:} The control constraints for $\mathtt{\upsilon}_{\mathfrak{i}}, \mathfrak{i} \in \{1,\cdots,\mathtt{q}\}$ exhibit asymmetry, that is, $\mathtt{\upsilon}_{\mathfrak{i},\max}\neq-\mathtt{\upsilon}_{\mathfrak{i}, \min}$. Under the circumstance, let
	\begin{equation}\label{22}
	\mathtt{\upsilon}_{\mathfrak{i}, \lim }:=\min \left\{\left|\mathtt{\upsilon}_{\mathfrak{i}, \min }\right|, \mathtt{\upsilon}_{\mathfrak{i}, \max }\right\}.
	\end{equation}  \par
	Let $\boldsymbol{\underline \upsilon}_{\min} = -(\mathtt{\upsilon}_{1,\lim},\cdots,\mathtt{\upsilon}_{\mathtt{q},\lim})^{T}$, $\boldsymbol{\overline \upsilon}_{\max} = (\mathtt{\upsilon}_{1,\lim},\cdots,\mathtt{\upsilon}_{\mathtt{q},\lim})^{T}$. Thus, \eqref{45} can be rewritten as
	\begin{equation}\label{24}
	\boldsymbol{\underline \upsilon}_{\min} \leq \boldsymbol{\mathfrak{\upsilon}} \leq \boldsymbol{\overline \upsilon}_{\max}.
	\end{equation} \par 
	It is evident that \textbf{Case 2} can be processed similar to \textbf{Case 1}. Therefore, a change in sign of $G_{d\mathfrak{i}}(\boldsymbol{\zeta}), \mathfrak{i} \in \{1,\ldots,\mathtt{q}\}$ does not affect Theorem \ref{theorem2}. \par
	To summarize, we can determine that \eqref{11} represents a crucial and comprehensive prerequisite for the control tactics to attain feasibility.
\par 

\subsection{}\label{B}
\textit{Proof of \textbf{Theorem \ref{theorem3}}: }
	It is easy to know that $Y(\boldsymbol{\zeta})$ is a continuous function. Then, we only need to determine whether there exists $Y(\boldsymbol{\zeta}) \rightarrow +\infty$ when $\mathtt{\Theta}(\boldsymbol{\zeta}) \rightarrow 0$. \par 
	
	The premise for system \eqref{1} to reach stability is that starting from a specific moment $t_s$, the value of $\boldsymbol{\mathfrak{\upsilon}}(t)$ satisfies 
	\begin{equation}\label{32}
	\dot{\mathtt{\Theta}}(\boldsymbol{\zeta}) + \alpha{(\mathtt{\Theta}(\boldsymbol{\zeta}))} = 0.
	\end{equation} \par 
	We selected $ K_{\mathtt{\Theta}} $ ($K_{\mathtt{\Theta}}$ is a positive constant). We constructed a Lyapunov function $\mathtt{V}_e(\boldsymbol{\zeta})=\frac{1}{2}(\boldsymbol{\zeta}+\boldsymbol{\mathfrak{d}})^T(\boldsymbol{\zeta}+\boldsymbol{\mathfrak{d}})+L \mathtt{\Theta}(\boldsymbol{\zeta})$, where $L$ is a positive constant. The derivative of $\mathtt{V}_e$ with respect to time $t$ is
	\begin{equation}\label{18}
	\begin{gathered}
	\dot{\mathtt{V}}_e(\boldsymbol{\zeta})=\frac{(\|\boldsymbol{\zeta}+\boldsymbol{\mathfrak{d}}\|+\mathtt{r})\|\boldsymbol{\zeta}+\boldsymbol{\mathfrak{d}}\| L_\mathtt{\hat{F}} \mathtt{\theta}}{\mathtt{\theta}}+ \\
	\hspace{2cm} ~~\left(\!\frac{(\|\boldsymbol{\zeta}\!+\!\boldsymbol{\mathfrak{d}}\|\!+\!\mathtt{r})^2\|\boldsymbol{\zeta}\!+\!\boldsymbol{\mathfrak{d}}\|}{(\mathtt{\Theta}\!+\!1) \mathtt{\theta}}\!-\!L\right) \alpha(\mathtt{\Theta}).
	\end{gathered}
	\end{equation} \par 
	We select $L \geq \max \left(\frac{(\|\boldsymbol{\zeta}+\boldsymbol{\mathfrak{d}}\|+\mathtt{r})^2\|\boldsymbol{\zeta}+\boldsymbol{\mathfrak{d}}\|}{(\mathtt{\Theta}+1) \mathtt{\theta}}\right)$. Because $\dot{\mathtt{\theta}}<0  \; (\|\boldsymbol{\zeta}-\boldsymbol{\zeta}_{\mathtt{d}}\| < \eta, 0<\eta<\epsilon)$, then, $\dot{\mathtt{V}}_e(\boldsymbol{\zeta}) < 0 (\mathtt{\Theta}>0)$. $\|\boldsymbol{\zeta}+\boldsymbol{\mathfrak{d}}\| \neq 0$ for Assumption \ref{assumption2}, we know that $\dot{\mathtt{\theta}} = 0, \mathtt{\Theta} = 0$ and $\dot{\mathtt{V}}_e(\boldsymbol{\zeta}) = 0$ when $t \rightarrow \infty$; that is, system \eqref{1} eventually remains stable, $\dot{\boldsymbol{\zeta}} = \boldsymbol{0}$. \par 
	According to the dynamic system \eqref{1}, in a stable state, $G_d(\boldsymbol{\zeta}) \boldsymbol{\mathfrak{\upsilon}} = -(\boldsymbol{\zeta}+\boldsymbol{\mathfrak{d}})^T \mathtt{\hat{F}}(\boldsymbol{\zeta})$. As $(\boldsymbol{\zeta}+\boldsymbol{\mathfrak{d}})^T \mathtt{\hat{F}}(\boldsymbol{\zeta})+\left|G_d(\boldsymbol{\zeta})\right| \boldsymbol{\mathfrak{\upsilon}}_{\min } \leq(\boldsymbol{\zeta}+\boldsymbol{\mathfrak{d}})^T \mathtt{\hat{F}}(\boldsymbol{\zeta})+G_d(\boldsymbol{\zeta}) \boldsymbol{\mathfrak{\upsilon}}=0$ when $\dot{\boldsymbol{\zeta}} = \boldsymbol{0}$. Thus, $\underset{\mathtt{\Theta}\rightarrow0}{\lim}Y(\boldsymbol{\zeta}) \leq 0$. \par 
	In summary, if the design method can ensure that \eqref{32} is established at a particular moment, then $\exists N>0$, $\underset{\boldsymbol{\zeta} \in \mathtt{\bar{C}}}{\max} \{Y(\boldsymbol{\zeta})\} \leq N$; that is, $\exists N>0, \forall \boldsymbol{\zeta} \in \mathtt{\bar{C}}$ such that $(\mathtt{\Theta}+1)Y(\boldsymbol{\zeta}) \leq (\mathtt{\Theta}+1)N < +\infty$. Thus, there exists a class of $\mathcal{K}$ functions that satisfy \eqref{11}.

\bibliographystyle{Bibliography/IEEEtranTIE}
\bibliography{Bibliography/IEEEabrv,Bibliography/BIB_CBF}\ 

\begin{thebibliography}{10}
\providecommand{\url}[1]{#1}
\csname url@samestyle\endcsname
\providecommand{\newblock}{\relax}
\providecommand{\bibinfo}[2]{#2}
\providecommand{\BIBentrySTDinterwordspacing}{\spaceskip=0pt\relax}
\providecommand{\BIBentryALTinterwordstretchfactor}{4}
\providecommand{\BIBentryALTinterwordspacing}{\spaceskip=\fontdimen2\font plus
\BIBentryALTinterwordstretchfactor\fontdimen3\font minus
  \fontdimen4\font\relax}
\providecommand{\BIBforeignlanguage}[2]{{%
\expandafter\ifx\csname l@#1\endcsname\relax
\typeout{** WARNING: IEEEtran.bst: No hyphenation pattern has been}%
\typeout{** loaded for the language `#1'. Using the pattern for}%
\typeout{** the default language instead.}%
\else
\language=\csname l@#1\endcsname
\fi
#2}}
\providecommand{\BIBdecl}{\relax}
\BIBdecl

\bibitem{ding2022security}
F.~Ding, H.~Shan, X.~Han, C.~Jiang, C.~Peng, and J.~Liu, ``Security-based
  resilient triggered output feedback lane keeping control for human-machine
  cooperative steering intelligent heavy truck under denial-of-service
  attacks,'' \emph{IEEE Transactions on Fuzzy Systems}, 2022.

\bibitem{wei2019risk}
C.~Wei, R.~Romano, N.~Merat, Y.~Wang, C.~Hu, H.~Taghavifar, F.~Hajiseyedjavadi,
  and E.~R. Boer, ``Risk-based autonomous vehicle motion control with
  considering human driver's behaviour,'' \emph{Transportation research part C:
  emerging technologies}, vol. 107, pp. 1--14, 2019.

\bibitem{crosato2022interaction}
L.~Crosato, H.~P. Shum, E.~S. Ho, and C.~Wei, ``Interaction-aware
  decision-making for automated vehicles using social value orientation,''
  \emph{IEEE Transactions on Intelligent Vehicles}, vol.~8, no.~2, pp.
  1339--1349, 2022.

\bibitem{zhang2022collaborative}
J.~Zhang, J.~Liu, and F.~Ding, ``Collaborative optimization design framework
  for hierarchical filter barrier control suspension system with projection
  adaptive tracking hydraulic actuator,'' \emph{Nonlinear Dynamics}, vol. 108,
  no.~4, pp. 3417--3434, 2022.

\bibitem{XuGrizzle2017}
X.~Xu, J.~W. Grizzle, P.~Tabuada, and A.~D. Ames, ``Correctness guarantees for
  the composition of lane keeping and adaptive cruise control,'' \emph{IEEE
  Transactions on Automation Science and Engineering}, vol.~15, no.~3, pp.
  1216--1229, Jul. 2018.

\bibitem{Zhu2022}
G.~Zhu, H.~Li, X.~Zhang, C.~Wang, C.-Y. Su, and J.~Hu, ``Adaptive consensus
  quantized control for a class of high-order nonlinear multi-agent systems
  with input hysteresis and full state constraints,'' \emph{IEEE/CAA Journal of
  Automatica Sinica}, vol.~9, no.~9, pp. 1574--1589, Sep. 2022.

\bibitem{Brudigam2023}
T.~Br\"{u}digam, M.~Olbrich, D.~Wollherr, and M.~Leibold, ``Stochastic model
  predictive control with a safety guarantee for automated driving,''
  \emph{IEEE Transactions on Intelligent Vehicles}, vol.~8, no.~1, pp. 22--36,
  2023.

\bibitem{HJWang2022}
H.~Wang, J.~Peng, F.~Zhang, H.~Zhang, and Y.~Wang, ``High-order control barrier
  functions-based impedance control of a robotic manipulator with time-varying
  output constraints,'' \emph{ISA Transactions}, vol. 129, pp. 361--369, Oct.
  2022.

\bibitem{TaylorAmes2020}
A.~J. Taylor and A.~D. Ames, ``Adaptive safety with control barrier
  functions,'' in \emph{2020 American Control Conference (ACC)}, pp.
  1399--1405, Jul. 2020.

\bibitem{Pickem2017}
D.~Pickem, P.~Glotfelter, L.~Wang, M.~Mote, A.~Ames, E.~Feron, and
  M.~Egerstedt, ``The robotarium: A remotely accessible swarm robotics research
  testbed,'' in \emph{2017 IEEE International Conference on Robotics and
  Automation (ICRA)}, pp. 1699--1706, Jun. 2017.

\bibitem{XuWaters2017}
X.~Xu, T.~Waters, D.~Pickem, P.~Glotfelter, M.~Egerstedt, P.~Tabuada, J.~W.
  Grizzle, and A.~D. Ames, ``Realizing simultaneous lane keeping and adaptive
  speed regulation on accessible mobile robot testbeds,'' in \emph{2017 IEEE
  Conference on Control Technology and Applications (CCTA)}, pp. 1769--1775,
  2017.

\bibitem{WangAmes2017ICRA}
L.~Wang, A.~D. Ames, and M.~Egerstedt, ``Safe certificate-based maneuvers for
  teams of quadrotors using differential flatness,'' in \emph{2017 IEEE
  International Conference on Robotics and Automation (ICRA)}, pp. 3293--3298,
  Jun. 2017.

\bibitem{ZhangY2022}
Y.~Zhang, M.~Xu, Y.~Qin, M.~Dong, L.~Gao, and E.~Hashemi, ``Mile:
  Multi-objective integrated model predictive adaptive cruise control for
  intelligent vehicle,'' \emph{IEEE Transactions on Industrial Informatics},
  pp. 1--9, 2022.

\bibitem{AmesGalloway2014}
A.~D. Ames, K.~Galloway, K.~Sreenath, and J.~W. Grizzle, ``Rapidly
  exponentially stabilizing control lyapunov functions and hybrid zero
  dynamics,'' \emph{IEEE Transactions on Automatic Control}, vol.~59, no.~4,
  pp. 876--891, Apr. 2014.

\bibitem{Gangopadhyay2022}
B.~Gangopadhyay, P.~Dasgupta, and S.~Dey, ``Safe and stable {RL (S2RL)} driving
  policies using control barrier and control lyapunov functions,'' \emph{IEEE
  Transactions on Intelligent Vehicles}, pp. 1--1, Feb. 2022.

\bibitem{HuWang2021}
C.~Hu and J.~Wang, ``Trust-based and individualizable adaptive cruise control
  using control barrier function approach with prescribed performance,''
  \emph{IEEE Transactions on Intelligent Transportation Systems}, vol.~23,
  no.~7, pp. 6974--6984, Jul. 2022.

\bibitem{AmesXu2017}
A.~D. Ames, X.~Xu, J.~W. Grizzle, and P.~Tabuada, ``Control barrier function
  based quadratic programs for safety critical systems,'' \emph{IEEE
  Transactions on Automatic Control}, vol.~62, no.~8, pp. 3861--3876, Aug.
  2017.

\bibitem{ZengCBF2021}
J.~Zeng, B.~Zhang, Z.~Li, and K.~Sreenath, ``Safety-critical control using
  optimal-decay control barrier function with guaranteed point-wise
  feasibility,'' in \emph{2021 American Control Conference (ACC)}, pp.
  3856--3863, May. 2021.

\bibitem{Khalil2002}
P.~G. Drazin and P.~D. Drazin, \emph{Nonlinear systems}, 3rd~ed., no.~10.\hskip
  1em plus 0.5em minus 0.4em\relax USA: Cambridge University Press, 1992.

\bibitem{Xiao2019}
W.~Xiao and C.~Belta, ``Control barrier functions for systems with high
  relative degree,'' in \emph{2019 IEEE 58th Conference on Decision and Control
  (CDC)}, pp. 474--479, Dec. 2019.

\bibitem{Hussain2018}
M.~Hussain, M.~Rehan, C.~Ki~Ahn, and M.~Tufail, ``Robust antiwindup for
  one-sided lipschitz systems subject to input saturation and applications,''
  \emph{IEEE Transactions on Industrial Electronics}, vol.~65, no.~12, pp.
  9706--9716, Dec. 2018.

\bibitem{NguyenLaurain2016}
A.-T. Nguyen, T.~Laurain, R.~Palhares, J.~Lauber, C.~Sentouh, and J.-C.
  Popieul, ``{LMI}-based control synthesis of constrained takagi-sugeno fuzzy
  systems subject to $\mathscr{L}_{2}$ or $\mathscr{L}_{\infty}$
  disturbances,'' \emph{Neurocomputing}, vol. 207, pp. 793--804, Sep. 2016.

\bibitem{Zhang2021}
J.~Zhang, K.~Li, and Y.~Li, ``Output-feedback based simplified optimized
  backstepping control for strict-feedback systems with input and state
  constraints,'' \emph{IEEE/CAA Journal of Automatica Sinica}, vol.~8, no.~6,
  pp. 1119--1132, Jun. 2021.

\bibitem{Nguyen2021}
A.-T. Nguyen, P.~Coutinho, T.-M. Guerra, R.~Palhares, and J.~Pan, ``Constrained
  output-feedback control for discrete-time fuzzy systems with local nonlinear
  models subject to state and input constraints,'' \emph{IEEE Transactions on
  Cybernetics}, vol.~51, no.~9, pp. 4673--4684, Sep. 2021.

\bibitem{Xiao2022}
W.~Xiao, C.~A. Belta, and C.~G. Cassandras, ``Sufficient conditions for
  feasibility of optimal control problems using control barrier functions,''
  \emph{Automatica}, vol. 135, p. 109960, Jan. 2022.

\bibitem{Glotfelter2017}
P.~Glotfelter, J.~Cortés, and M.~Egerstedt, ``Nonsmooth barrier functions with
  applications to multi-robot systems,'' \emph{IEEE Control Systems Letters},
  vol.~1, no.~2, pp. 310--315, Jun. 2017.

\bibitem{Lindemann2019}
L.~Lindemann and D.~V. Dimarogonas, ``Control barrier functions for signal
  temporal logic tasks,'' \emph{IEEE Control Systems Letters}, vol.~3, no.~1,
  pp. 96--101, Jul. 2019.

\bibitem{Nocedal2006}
\BIBentryALTinterwordspacing
S.~W. Jorge~Nocedal, \emph{Numerical Optimization}, 2nd~ed., ser. Springer
  series in operations research.\hskip 1em plus 0.5em minus 0.4em\relax Berlin,
  Germany: Springer, 2006. [Online]. Available:
  \url{http://gen.lib.rus.ec/book/index.php?md5=7016b74cfe6dc64c75864322ee4aa081}
\BIBentrySTDinterwordspacing

\bibitem{Payman2012}
P.~Shakouri, A.~Ordys, and M.~R. Askari, ``Adaptive cruise control with
  stop$\&$go function using the state-dependent nonlinear model predictive
  control approach,'' \emph{ISA Transactions}, vol.~51, no.~5, pp. 622--631,
  Sep. 2012.

\bibitem{Ioannou1993}
P.~Ioannou and C.~Chien, ``Autonomous intelligent cruise control,'' \emph{IEEE
  Transactions on Vehicular Technology}, vol.~42, no.~4, pp. 657--672, Nov.
  1993.

\end{thebibliography}

\vspace{-2.5cm}
\begin{IEEEbiography}[{\includegraphics[width=1in,height=1.25in,clip,keepaspectratio]{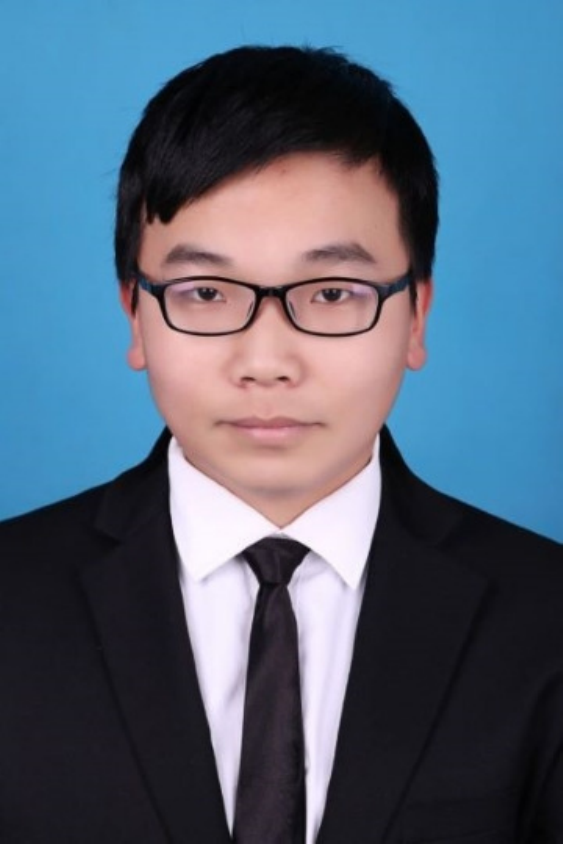}}]
	{Fan Yang} received the B.S. degree in automation from Wuhan Institute of Technology, Wuhan, China, in 2019. He is currently pursuing the M.S. degree in control science and engineering with the School of Automation Engineering, University of Electronic Science and Technology of China, Chengdu, China. His current research interests include state-constrained control and autonomous driving. \par
\end{IEEEbiography}

\vspace{-2cm}
\begin{IEEEbiography}[{\includegraphics[width=1in,height=1.25in,clip,keepaspectratio]{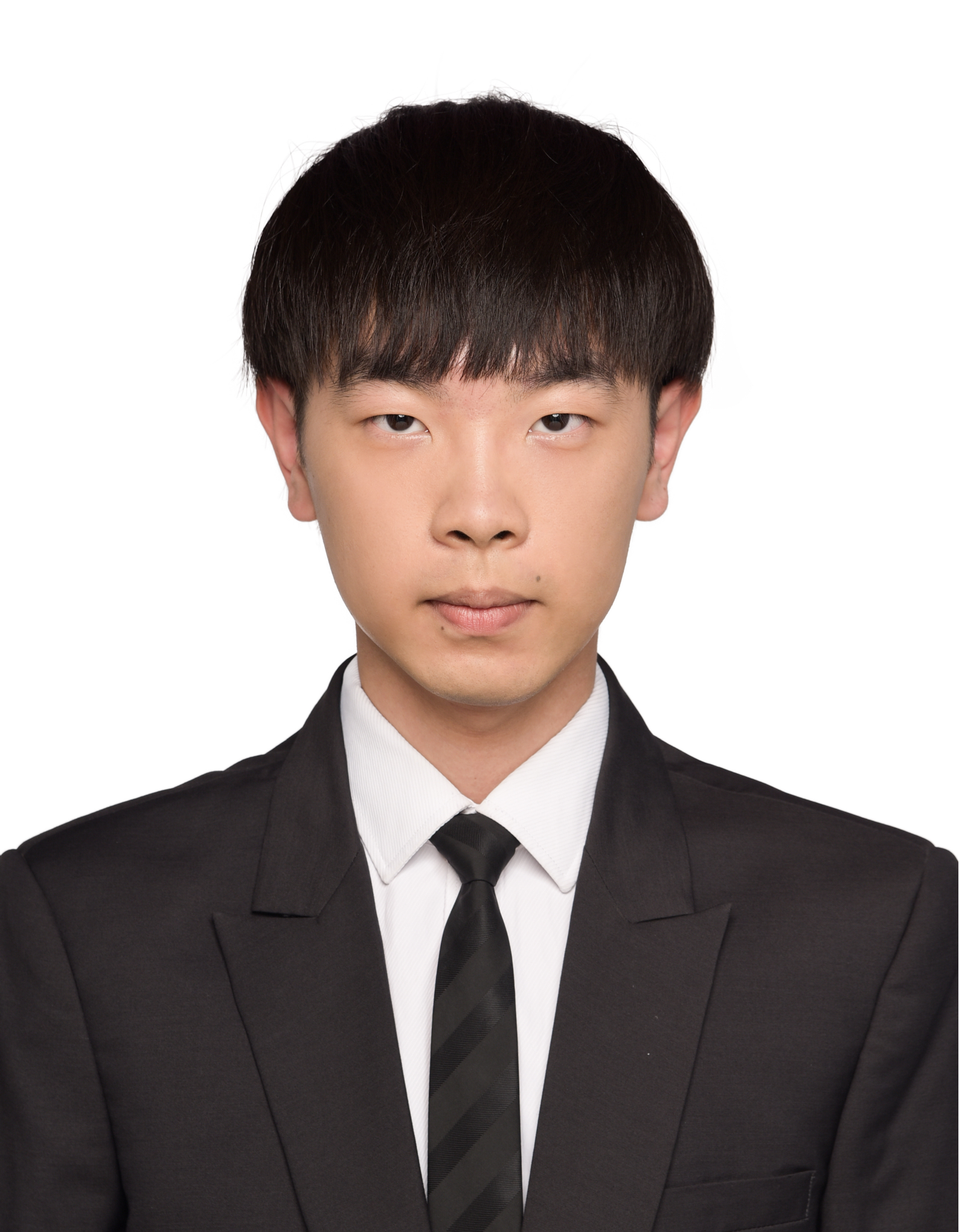}}]
	{Haoqi Li} received the B.S. degree and the M.S. degree from the Shanghai University of Electric Power and the Northeast Electric Power University in 2018 and 2022, respectively. He is currently pursuing the Ph.D. degree in control science and engineering from the School of Automation Engineering, University of Electronic Science and Technology of China. His research interests include adaptive and state constrained control for multi-agent systems and nonlinear systems.
\end{IEEEbiography}

\begin{IEEEbiography}[{\includegraphics[width=1in,height=1.25in,clip,keepaspectratio]{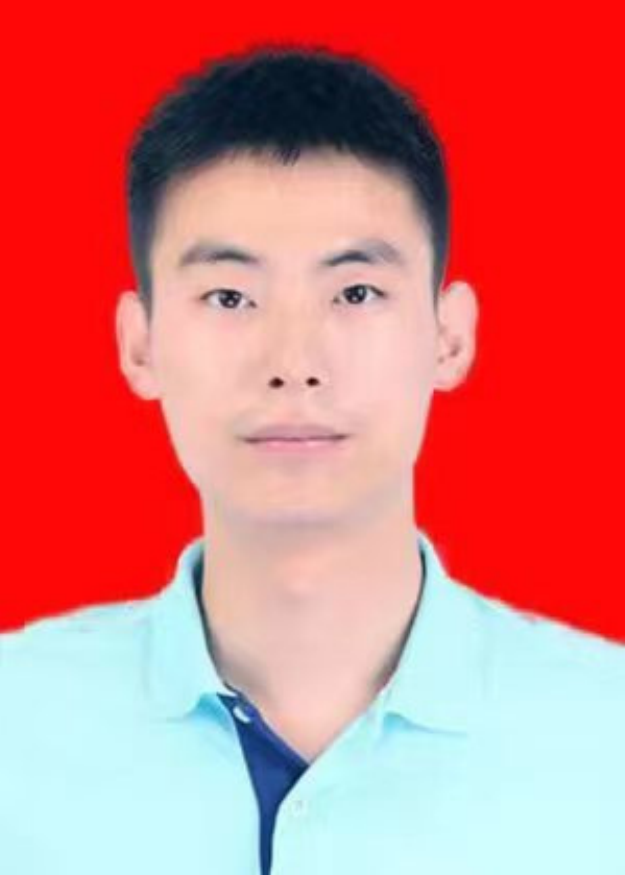}}]
	{Maolong Lv} received the Ph.D. degree from Delft Center for Systems and Control, Delft University of Technology, Delft, The Netherlands, in 2021. \par
	He is currently with Air Force Engineering University, Xi'an, China. His research interests include adaptive learning control, distributed control, reinforce learning, and intelligent decision making with applications in multiagent systems, hypersonic vehicles, and unmanned autonomous systems. \par
	Dr. Lv was awarded with Descartes Excellence Fellowship from the French Goverment in 2018, which allowed him a Research Visit from 2018 to 2019 at University of Grenoble working on adaptive networked systems with emphasis on traffic with human deriven and autonomous vehicles. He was awarded the Yong Talent Support Project for Military Science and Technology, the Yong Talent Fund of Association for Science and Technology in Shanxi, and the Postdoctoral International Exchange Program in 2022. He is currently an Editor for Aerospace and Measurement and Control. 
\end{IEEEbiography}

\vspace{-1cm}
\begin{IEEEbiography}[{\includegraphics[width=1in,height=1.25in,clip,keepaspectratio]{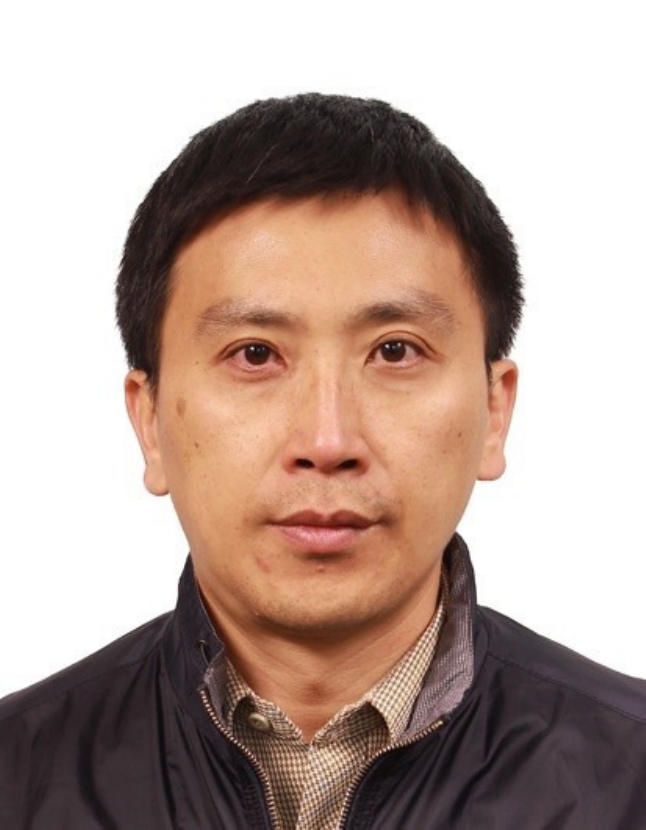}}]
	{Jiangping Hu} (Senior Member, IEEE) received the B.S. degree in applied mathematics and the M.S. degree in computational mathematics from Lanzhou University, Lanzhou, China, in 2000 and 2004, respectively, and the Ph.D. degree in modelling and control of complex systems from the Academy of Mathematics and Systems Science, Chinese Academy of Sciences, Beijing, China, in 2007. He has held various positions with the Royal Institute of Technology, Stockholm, Sweden, The City University of Hong Kong, Hong Kong, Sophia University, Tokyo, Japan, and Western Sydney University, Sydney, NSW, Australia. \par 
	He is currently a Professor with the School of Automation Engineering, University of Electronic Science and Technology of China, Chengdu, China. His current research interests include multi-agent systems, social dynamics, and sensor networks. \par 
	Dr. Hu has served as an Associate Editor for KYBERNETIKA and an Associate Editor for Journal of Systems Science and Complexity.
\end{IEEEbiography}

\begin{IEEEbiography}[{\includegraphics[width=1in,height=1.25in,clip,keepaspectratio]{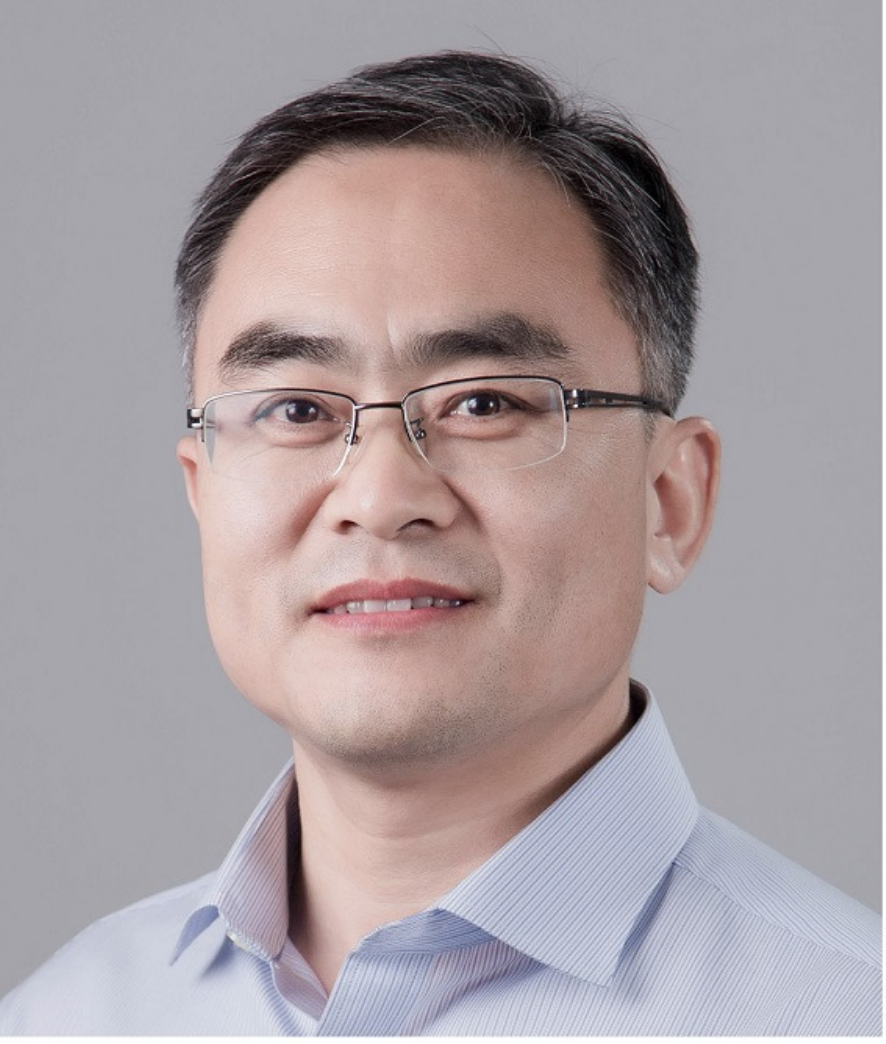}}]
	{Qingrui Zhou}  received the B.S. degree in control science and the M.S. degree in Operational Research and Cybernetics from Shandong University,  in 1994 and 2002, respectively, and the Ph.D. degree in Control Theory and Control Engineering from the Institute of Automation, Chinese Academy of Sciences, Beijing, China, in 2005.  \par 
	He is currently a Professor with the Qianxuesen Lab, China Academy of Space Technology, Beijing, China. His current research interests include spacecraft formation flying, spacecraft navigation and control, swarm intelligence. 	
\end{IEEEbiography}

\vspace{2.5cm}
\begin{IEEEbiography}[{\includegraphics[width=1in,height=1.25in,clip,keepaspectratio]{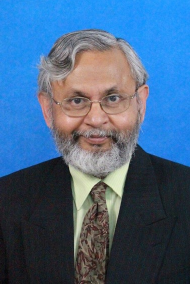}}]
	{Bijoy Kumar Ghosh} (Life Fellow, IEEE) received the Ph.D. degree from Harvard University in 1983. From 1983 to 2007, he was with the Department of Electrical and Systems Engineering, Washington University. He is currently the Dick and Martha
	Brooks Regents Professor of Mathematics and Statistics at Texas Tech University, Lubbock, TX, USA. His research interests include biomechanics, cyberphysical systems, and control problems in rehabilitation engineering. Dr. Ghosh became a fellow of the International
	Federation on Automatic Control in 2014.
\end{IEEEbiography}

\end{document}